\documentclass[12pt]{article}
\usepackage[latin1]{inputenc}
\usepackage[english]{babel}
\usepackage{amsfonts,amsmath,mathrsfs,amssymb}
\usepackage[dvips]{graphicx}
\usepackage[sort, round]{natbib}
\usepackage{verbatim}
\usepackage{booktabs}
\usepackage{subfig}
\usepackage[usenames,dvipsnames]{color}
\usepackage{caption}
\usepackage[english]{babel}
\usepackage[lined,boxed,commentsnumbered]{algorithm2e}
\usepackage{multirow}
\usepackage[top=2.5cm,left=2.5cm,right=2.5cm,bottom=2.8cm]{geometry}
\usepackage{hyperref}

\def\T{{\footnotesize {^{_{\sf T}}}}} 
\newcommand{\Real}{{\rm I}\negthinspace {\rm R}}

\usepackage{color}

\newtheorem{theorem}{Theorem}[section]

\newtheorem{proposition}[theorem]{Proposition}

\newenvironment{proof}[1][Proof]{\begin{trivlist}
\item[\hskip \labelsep {\bfseries #1}]}{\end{trivlist}}

\begin{document}

\title{Approximate Bayesian Computation\\ with composite score functions}

\author{Erlis Ruli$^*$, Nicola Sartori and  Laura Ventura\\
{\it {\small Department of Statistical Sciences, University of Padova, Italy}} \\
{\tt {\small $^*$ruli@stat.unipd.it, sartori@stat.unipd.it, ventura@stat.unipd.it}} }
\maketitle
\begin{abstract}
Both Approximate Bayesian Computation (ABC)  and composite likelihood methods are useful for Bayesian and frequentist inference, respectively, when the likelihood function is intractable. 
We propose to use composite likelihood score functions as summary statistics in ABC in order to obtain accurate approximations to the posterior distribution. This is  motivated by the use of the score function of the full likelihood, and extended to general unbiased estimating functions in complex models.
Moreover, we show that  if the composite score is suitably standardised, the resulting ABC procedure is  invariant to reparameterisations and automatically adjusts the curvature of the composite likelihood, and of the corresponding posterior distribution.
The method is illustrated through examples with simulated data, and an  application to modelling of spatial extreme rainfall data is discussed.
\end{abstract}

\noindent {\em Keywords:} Complex model;  Composite marginal likelihood; 
Likelihood-free inference; Pairwise likelihood; 
Tangent exponential model; Unbiased estimating function.

\section{Introduction}

The summary of the data on a given model offered by the likelihood function is the key ingredient of all likelihood-based inferential methods. However, likelihood inference, both frequentist and Bayesian, { is difficult or even impossible to perform} when the likelihood is analytically or computationally intractable. This usually occurs in the presence of complex models, such as models with complicated dependence structures or in models with many latent variables. 
 
In these situations, for frequentist or Bayesian inference, surrogates of the ordinary likelihood are useful. A notable contribution is given by composite likelihoods {\citep{lindsay1988composite}}, which are based on the composition of suitable lower dimensional densities, such as bivariate marginal \citep{cox2004note}, conditional or full conditional densities \citep{varin2011overview}. The use of composite likelihoods has been widely advocated in different complex applications of frequentist inference (see \citealp{varin2011overview}, for a general review, and \citealp{larribe2011composite}, for a review in genetics).

From a Bayesian perspective, when the computation of the likelihood is impracticable, but it is easy to simulate from the model, {an approximation of the posterior distribution can be obtained by Approximate Bayesian Computation (ABC)}.
The idea {of ABC} is to simulate from the model for different parameter values, and to keep those values that produce simulated datasets that approximately match the observed data \citep[see][]{beaumont2010,marin2012}. The most popular {ABC} approach  is to consider an approximate matching of some summary statistics, evaluated at the observed and simulated data, by means of suitable distances. When the statistics are sufficient for the parameters of the model, this method  leads to the exact posterior distribution as the distance tends to zero. However, in realistic  applications sufficient statistics are not available and the practitioner must resort to a careful selection of data summaries. 

In this paper we propose the use of a {scaled} composite likelihood score function as  summary statistic in ABC. The idea is  motivated by the use of the score function when the full likelihood is available and is then extended to  composite likelihood score functions in complex models. The ABC algorithm based on composite score functions (ABC-cs) searches for parameter values of the model of interest that produce simulated data which lead to composite score values -- at the observed maximum composite likelihood estimate -- close to those based on the original data. {The composite score statistic is rescaled with the corresponding information in order to take into account the amount of information on different parameter components. This rescaling has also the effect of making the ABC-cs procedure invariant to reparameterisations.

Although ABC-cs is not universally applicable, when it can be employed, e.g.  when  sensible composite likelihoods can be defined for the given model, it has several useful features. First of all, the summary statistic has dimension equal to the number of parameters, and it inherits, by construction, structural  stochastic characteristics of the model. Such statistic is also generally easy to compute, being often available analytically. 
Moreover, ABC-cs automatically adjusts the curvature of the composite likelihood and of the corresponding posterior distribution. Indeed, composite likelihoods typically do not satisfy the second Bartlett identity, also known as the information identity, and this usually leads to overly concentrated posterior distributions \citep{smith2009extended,pauli2011composite}. 
Hence, the straightforward use of the composite likelihood as a replacement to the full likelihood in Bayes' formula does not generally give a valid posterior distribution.
For this reason,  \citet{pauli2011composite} propose to first rescale the composite likelihood at the mode and then use this calibrated version in Bayes' theorem. This certainly improves inference, but sometimes may lead to the opposite problem of overestimating the variability in the posterior. From this point of view, at least in the examples considered here, the ABC-cs method gives better results, although computationally it may be more demanding, as is often the case with ABC methods.

There have been other attempts to merge composite likelihoods with the ABC framework.} {For instance, 
\cite{mengersen2013bayesian} use the composite score function with the empirical likelihood to produce an approximate  and weighted posterior sample. Their approach  is not ABC in the usual common sense, as it does not simulate from the full model. 
Also \citet[Sec. 7.1]{barthelme2011} mention the use of composite likelihoods in order to reduce the computational complexity of ABC, but they do not use the composite score as  summary statistic.}  

Our approach is similar in spirit to indirect inference  \citep[][]{heggland2004,gourieroux1993indirect}, as also the ABC-cs method  relies on an auxiliary model likelihood, that is the composite likelihood. 
{However,} ABC-cs is less computationally demanding since it does not require repeated maximisation for each simulated dataset. The indirect inference method within ABC has been discussed by \cite{drovandi2011approximate}. {More recently, also \cite{gleim2013} and \cite{drovandi2014} advocate the use of score functions based on auxiliary models as ABC summary statistics.

In Section 2 some background on ABC and  composite likelihood methods is given. The proposed ABC-cs algorithm is presented in Section 3. Section 4 illustrates the method {by} two examples, while Section 5 gives an application to spatial extreme data. Section 6 concludes the paper.

\section{Statistical methods}

\subsection{ABC algorithms}
Let $\pi(\theta)$ be a prior distribution for the parameter $\theta\in\Theta \subseteq \Real^d$, $L(\theta)=L(\theta;y)=f(y;\theta)$ the likelihood function based on data  $y$ and $\pi(\theta|y) \propto \pi(\theta) L(\theta)$ the posterior distribution of $\theta$. Suppose that $L(\theta)$ is unavailable for mathematical or computational reasons. 

The primary purpose of ABC algorithms is to approximate the posterior distribution when other methods, such as Markov chain Monte Carlo (MCMC), data augmentation, importance sampling or Laplace approximation cannot be used, but when the data from $f(y;\theta)$ can be easily simulated. Let $\eta(\cdot)$ be a set of summary statistics, $\rho(\cdot,\cdot)$ a distance function and $\epsilon>0$ a tolerance threshold. Moreover, let $y^{\mathrm{obs}}$ be the observed data. Then the ABC accept-reject algorithm (Algorithm~\ref{alg:abc})  

\vspace{1em}
\IncMargin{1em}
\begin{algorithm}[H]
  \SetAlgoLined
  \KwResult{A sample $(\theta^{(1)},\ldots,\theta^{(m)})$ from $\pi_\epsilon(\theta|\eta(y^{\text{obs}}))$}
  \For {$i = 1 \to m$}{
  \Repeat{$\rho(\eta(y), \eta(y^{\mathrm{obs}}))\leq\epsilon$}{
    \nl draw $\theta^*\,\sim\,\pi(\theta)$\\
    \nl draw $y\,\sim\, f(y;\theta^*)$\\
}
 \nl set $\theta^{(i)}\, =\, \theta^*$
}
  \caption[]{\label{alg:abc} ABC accept-reject sampler.}
\end{algorithm}\DecMargin{1.em}
\vspace{1em}
\noindent samples from the  joint distribution
\begin{equation}
\pi_{\epsilon}(\theta,y|\eta(y^{\text{obs}})) = \frac{\pi(\theta)f(y;\theta)\mathbb{I}_{A_{\epsilon,y^{\text{obs}}}}(y)}{\int_{A_{\epsilon,y^{\text{obs}}}\times\Theta}\pi(\theta)f(y;\theta)\,dyd\theta},
\end{equation}
where $\mathbb{I}_{A_{\epsilon,y^{\text{obs}}}}(y)$ is the indicator function of the set  
$
A_{\epsilon,y^{\text{obs}}}(y) = \{y:\,\rho(\eta(y),\eta(y^{\text{obs}}))\leq\epsilon\}
$,
and it produces an approximation to the posterior distribution $\pi(\theta|y^{\text{obs}})$, given by
\[
\pi_{\epsilon}(\theta|\eta(y^{\text{obs}})) = \int \pi_{\epsilon}(\theta,y|\eta(y^{\text{obs}}))\,dy\,.
\]  
If  $\epsilon\to 0$, then $\pi_{\epsilon}(\theta|\eta(y^{\text{obs}})) \to \pi(\theta|\eta(y^{\text{obs}}))$ \citep{blum2010approximate}. In addition, if $\eta(\cdot)$ is sufficient, then  $\pi_{\epsilon}(\theta|\eta(y^{\text{obs}})) \to \pi(\theta|y^{\text{obs}})$ \citep[see, for instance,][]{marin2012}.

The threshold $\epsilon$ cannot be fixed to zero, for  computational efficiency, and is generally set to the $\alpha$th quantile of the distance among the statistics, with $\alpha$ {being} typically very small \citep[see e.g.][]{beaumont2002}. With non-informative priors, the original accept-reject algorithm may be very inefficient
\citep{marin2012}. Nevertheless, this issue can be effectively addressed by using more advanced Monte Carlo algorithms, such as MCMC methods \citep{marjoram2003markov}, importance sampling \citep{fearnhead:2012to}, sequential or population Monte Carlo approaches \citep{sisson07,sisson09,beaumont2009adaptive,drovandi2011estimation,del2012adaptive}. Hence, the choice of $\eta(\cdot)$ is a crucial point of ABC. Indeed, what ABC can achieve at best is $\pi(\theta|\eta(y^{\text{obs}}))$, since $\eta(\cdot)$ is rarely sufficient. This loss of information seems to be a necessary price to pay for the access to computable quantities. {The idea here is to base the construction of $\eta(\cdot)$ on the score function of a composite likelihood, which is briefly recalled in the next section.}
 
\subsection{Composite likelihoods}
\label{compositelikelihoods}

Let $y=(y_1,\ldots,y_n)$ be a realisation of $Y=(Y_1,\ldots,Y_n)$, with independent components $Y_i\sim f(y_i;\theta)$, where $y_i \in \mathcal{Y} \subseteq \Real^q$, and let $\{A_1,\ldots,A_K\}$ be a set of marginal or conditional events on $\mathcal{Y}$. {The composite log-likelihood is defined as \citep[see, e.g.,][]{lindsay1988composite}
\begin{equation}
c\ell(\theta;y) = \sum_{i=1}^n\sum_{k=1}^{K} w_{k} \log f(y \in A_k;\theta),
\label{cl}
\end{equation}}
where $w_k$, $k = 1,\ldots,K$, are non-negative weights. When the events $A_k$  are defined in terms of pairs of bivariate marginal densities $f_{hk} (\cdot,\cdot;\theta)$, {then (\ref{cl}) is called the pairwise log-likelihood and is given by}
\begin{equation}
p\ell(\theta; y) = \sum_{i=1}^{n} \sum_{\substack{h,k=1\\ h\neq k}}^{q} w_{hk} \log f_{hk} (y_{ih},y_{ik};\theta).
\label{pl}
\end{equation}



The validity of inference about $\theta$ based on a composite likelihood can be assessed from the standpoint of unbiased estimating functions or the Kullback-Leibler criterion \citep{lindsay1988composite,cox2004note,lindsay2011issues,varin2011overview}.  Under rather broad assumptions  \citep[see, for instance,][]{molenberghs2005models}, the maximum composite likelihood estimator (MCLE) $\tilde \theta$ is the solution of the composite score equation
\begin{equation}
 c\ell_\theta (\theta;y) = \frac{\partial c\ell(\theta;y)}{\partial \theta}=0.
 \label{scorep}
\end{equation}
The composite score $c\ell_\theta (\theta;y)$ is unbiased, i.e. $E_\theta\{c\ell_\theta (\theta;Y)\} = 0$, since it is a linear combination of valid score functions. Moreover, $\tilde\theta$ is consistent and approximately normal, with mean $\theta$ and variance 
$$
V(\theta) =  H(\theta)^{-1} J(\theta) H(\theta)^{-1}
\ ,
$$
where $H(\theta)=E_\theta\{-\partial c\ell_\theta (\theta;Y)/\partial \theta^{\T}\}$ and $J(\theta)=\mbox{var}_\theta\{c\ell_\theta (\theta;Y)\}$ are the sensitivity and the variability matrices, respectively. For a full likelihood, $H(\theta) = J(\theta)$ and this is known as the information identity. The matrix $G(\theta)=V(\theta)^{-1}$ is known as the Godambe information, and the sandwich form of $V(\theta)$ is due to the failure of the information identity since, in general, $H(\theta) \neq J(\theta)$. This failure typically implies that the composite likelihood is wrongly too concentrated.

\cite{smith2009extended} discuss the use of the composite likelihood in  Bayes' theorem and notice that it may give overly too concentrated posteriors. \cite{pauli2011composite} suggest to combine   a calibrated composite likelihood $cL_c(\theta;y)=\exp\{ c\ell_c(\theta;y) \}$ with a prior $\pi(\theta)$ in order to obtain a calibrated composite posterior
\begin{equation}
\pi_c(\theta|y) \propto \pi(\theta) cL_c(\theta;y)
\ .
\label{postc}
\end{equation}
The calibrated composite likelihood is given by
\begin{equation}
cL_c(\theta;y) = cL(\theta;y)^{1/\bar{\omega}},
\label{plc}
\end{equation}
where $\bar\omega = \mbox{trace}\{J(\tilde\theta)H(\tilde\theta)^{-1}\}/d$.  This calibration approximately adjusts the curvature of the composite likelihood and allows to recover the asymptotic properties of a posterior distribution. Examples of (\ref{postc}) are discussed in \citet{pauli2011composite}; see also \cite{ribatet.etal2012} for other types of adjustments. 

Bayesian inference based on composite likelihoods leads to composite posteriors, which depend crucially on the calibration adjustment in (\ref{plc}). Since $\bar\omega$ is evaluated at $\tilde\theta$, this  calibration might lead to composite posteriors \eqref{postc} with overestimated variability (see Section 4).

\section{ABC with unbiased estimating functions}


We propose a suitably rescaled composite score function -- evaluated at the observed MCLE -- as the summary statistic for ABC.
This leads to the ABC-cs algorithm, which is introduced and  discussed in Section \ref{ABC-CS}. The aim of Section \ref{abs-score} is to provide a logical motivation for the proposal of Section \ref{ABC-CS},  by discussing the ideal, although unrealistic, situation in which a full computable likelihood is available.

\subsection{ABC with score functions}\label{abs-score}


In this section we show how the score function evaluated at the observed maximum likelihood estimate provides a natural summary statistic for ABC in the, admittedly restrictive, case in which a full likelihood is available. In the following, we first start with a full exponential  model and then extend the reasoning to a generic model.

Consider a full exponential family with density 
\begin{eqnarray}
f(y;\varphi)=h(y) \exp\{\varphi^\T s(y) - k(\varphi)\}
\ ,
\label{dexp}
\end{eqnarray} 
where $h(y)>0$, $\varphi$ is the   canonical parameter, $s(y)$ is the $d$-dimensional sufficient statistic, and $k(\varphi)$ is the cumulant generating function of $s(y)$. In this case, the obvious summary statistic for ABC is the minimal sufficient statistic $s(y)$, which  gives the exact posterior for $\epsilon\to 0$  {\citep[see, e.g.,][]{blum2010approximate}.} 
The following proposition shows that the ABC posterior based on a suitably rescaled score function is exact for $\epsilon\to 0$ and also invariant to reparameterisations.

\begin{proposition}
\label{p1}
Let $\ell(\varphi;y)=\varphi^\T s(y)-k(\varphi)$ be the log-likelihood for $\varphi$ based on model (\ref{dexp}), and consider as the summary statistic the rescaled score evaluated at a fixed $\varphi_0$, that is
$$
\eta(y;\varphi_0) = B(\varphi_0)^{-1} \ell_\varphi(\varphi_0;y)
\ ,
$$
where $\ell_\varphi(\varphi;y)=\partial \ell(\varphi;y)/\partial \varphi=s(y) - \partial k(\varphi)/\partial \varphi$  and $B(\varphi)$ is such that $i(\varphi)= \partial^2 k(\varphi)/(\partial \varphi \partial \varphi^{\T})= B(\varphi) B(\varphi)^{\T}$.  {Then, the ABC posterior based on $\eta(y;\varphi_0)$ is exact for $\epsilon\to 0$ and also invariant to reparameterisations, regardless of the fixed value $\varphi_0$.} 
\end{proposition}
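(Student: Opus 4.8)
The plan is to reduce everything to the single observation that, for any fixed $\varphi_0$, the proposed statistic $\eta(y;\varphi_0)$ is an \emph{invertible} transformation of the minimal sufficient statistic $s(y)$, and then to invoke the ABC consistency results already recalled in Section~2.1. Invariance will follow by applying the chain rule to both the score and the information.

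First I would note that $\ell_\varphi(\varphi_0;y)=s(y)-\partial k(\varphi_0)/\partial\varphi$ is an affine function of $s(y)$ whose additive term $\partial k(\varphi_0)/\partial\varphi=E_{\varphi_0}\{s(Y)\}$ does not depend on $y$. Since (\ref{dexp}) is a regular minimal exponential family, $i(\varphi_0)=\partial^2 k(\varphi_0)/(\partial\varphi\partial\varphi^{\T})$ is positive definite, so any factor $B(\varphi_0)$ with $i(\varphi_0)=B(\varphi_0)B(\varphi_0)^{\T}$ is an invertible $d\times d$ matrix; hence $\eta(y;\varphi_0)=B(\varphi_0)^{-1}\big(s(y)-\partial k(\varphi_0)/\partial\varphi\big)$ is a bijective affine map of $s(y)$, for \emph{every} fixed $\varphi_0$. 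A one-to-one function of a sufficient statistic is sufficient, so $\eta(\cdot;\varphi_0)$ is sufficient for $\varphi$. Exactness is then immediate: because $y\mapsto\eta(y;\varphi_0)$ factors through $s$ via a homeomorphism, the acceptance set $\{y:\rho(\eta(y;\varphi_0),\eta(y^{\text{obs}};\varphi_0))\le\epsilon\}$ shrinks, as $\epsilon\to 0$, to $\{y:s(y)=s(y^{\text{obs}})\}$, and therefore $\pi_\epsilon(\theta\mid\eta(y^{\text{obs}};\varphi_0))\to\pi(\theta\mid s(y^{\text{obs}}))=\pi(\theta\mid y^{\text{obs}})$ by \citet{blum2010approximate} and \citet{marin2012}.

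For reparameterisation invariance I would take a smooth bijection $\psi=\psi(\varphi)$ with Jacobian $J_\psi=\partial\varphi/\partial\psi$ and use the two standard transformation rules $\ell^{*}_\psi(\psi;y)=J_\psi^{\T}\,\ell_\varphi(\varphi;y)$ and $i^{*}(\psi)=J_\psi^{\T}\,i(\varphi)\,J_\psi$ (the second because the extra second-derivative terms have zero expectation). Then $B^{*}(\psi_0)=J_\psi^{\T}B(\varphi_0)$ is a valid factor of $i^{*}(\psi_0)$, and substituting into the definition gives $\eta^{*}(y;\psi_0)=B^{*}(\psi_0)^{-1}\ell^{*}_\psi(\psi_0;y)=B(\varphi_0)^{-1}J_\psi^{-\T}J_\psi^{\T}\ell_\varphi(\varphi_0;y)=\eta(y;\varphi_0)$, with $\varphi_0=\varphi(\psi_0)$; the summary statistic, hence the acceptance event and the ABC posterior, are unchanged.

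The one point needing care — and the main, though mild, obstacle — is the non-uniqueness of the factorisation $i(\varphi_0)=B(\varphi_0)B(\varphi_0)^{\T}$: two choices $B$ and $BQ$ with $Q$ orthogonal give statistics related by $\eta\mapsto Q^{\T}\eta$, and likewise in the reparameterised problem any other square root of $i^{*}(\psi_0)$ differs from $J_\psi^{\T}B(\varphi_0)$ by an orthogonal factor. I would dispose of this by assuming, as is standard, that $\rho$ is Euclidean (or any rotation-invariant norm), so that $\rho(Q^{\T}\eta(y),Q^{\T}\eta(y^{\text{obs}}))=\rho(\eta(y),\eta(y^{\text{obs}}))$; the acceptance region, and therefore $\pi_\epsilon$, is then insensitive both to the choice of square root and to the parameterisation. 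Collecting these observations yields the proposition.
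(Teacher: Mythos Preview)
Your proof is correct and follows essentially the same route as the paper's: exactness via the observation that $\eta(y;\varphi_0)$ is an invertible affine transformation of the minimal sufficient statistic $s(y)$, and invariance via the chain rule applied to the score and the Fisher information under a smooth reparameterisation. Your extra care about the non-uniqueness of the square root $B(\varphi_0)$---resolved by appealing to rotation-invariance of the Euclidean distance---is a refinement the paper's proof leaves implicit (it simply takes $\bar B(\theta)=\varphi_\theta^{\T}B(\varphi(\theta))$ as \emph{the} factor of $\bar\imath(\theta)$ without discussing alternative choices).
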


\begin{proof}
For any fixed value $\varphi_0$, the rescaled score $\eta(y;\varphi_0)$ is a linear transformation of the minimal sufficient statistic $s(y)$, and thus it is itself minimal sufficient.  This proves that the ABC posterior based on $\eta(y;\varphi_0)$ is exact for $\epsilon\to 0$.

Consider the reparametrisation $\theta=\theta(\varphi)$. Let $\bar\ell(\theta)=\ell(\varphi(\theta))$ and $\bar{\imath}(\theta)=\varphi_{\theta}^{\T} i(\varphi(\theta)) \varphi_{\theta}$, where $\varphi_{\theta}=\partial \varphi(\theta)/\partial \theta$. The rescaled score is $\bar\eta(y;\theta_0) = \bar{B}(\theta_0)^{-1} \bar\ell_\theta(\theta_0;y)$, with $\theta_0=\theta(\varphi_0)$, $\bar\ell_\theta(\theta;y)=\partial \bar\ell(\theta;y)/\partial \theta$ and $\bar{B}(\theta)$ such that $\bar{B}(\theta) \bar{B}(\theta)^{\T}= \bar{\imath}(\theta)$. Then, since $\bar{B}(\theta)=\varphi_{\theta}^{\T} B(\varphi(\theta))$ and $\bar\ell_\theta(\theta;y)=\varphi_{\theta}^{\T} \ell_\varphi(\varphi(\theta);y)$, it follows that $\bar\eta(y;\theta_0) = \eta(y;\varphi_0)$. This proves invariance to reparameterisations.$\blacksquare$
\end{proof}

Proposition \ref{p1} holds for any value of $\varphi_0$. In particular, when $\varphi_0$ is the observed value of the  maximum likelihood estimate (MLE) at the observed data $y^{\text{obs}}$, i.e. $\hat\varphi^{\text{obs}}$, we have $\eta(y^{\text{obs}};\hat\varphi^{\text{obs}})=0$. This choice of $\varphi_0$ is particularly convenient for a general model $f(y;\theta)$. Indeed, in this case, at least in principle, we could use an alternative representation of $y$, or equivalently the minimal sufficient statistic based on $y$, given by $(\hat\theta,a)$, where $\hat\theta$ is the MLE and $a$ is an ancillary statistic, which means that its distribution does not depend on $\theta$. Hence, we could replace $f(y;\theta)$ with $f(\hat\theta,a;\theta)$, and the latter can be factorised as
$$
f(\hat\theta,a;\theta)=f(\hat\theta|a;\theta) f(a)\,.
$$
This means that the likelihood for $\theta$ can be based equivalently on $f(y;\theta)$ or $f(\hat\theta|a;\theta)$. Unfortunately, it may not be easy in general to find $f(\hat\theta|a;\theta)$. On the other hand,  it is possible to approximate such density through a {\em tangent exponential model} at (and near) the fixed value $y^{\text{obs}}$ \citep[][Sect. 3.2]{Fraser.Reid:1995,reid2003}. Denoting by $\ell(\theta;y^{\text{obs}})$ the observed log-likelihood, the approximation to the log-likelihood based on the tangent exponential model is 
\begin{equation}\label{TEM}
\ell^{\text{TE}}(\theta;y) = \ell(\theta;y^{\text{obs}})-\ell(\hat\theta^{\text{obs}};y^{\text{obs}})+\{\varphi(\theta)-\varphi(\hat\theta^{\text{obs}})\}^\T s(y)\,,
\end{equation}
where $\hat\theta^{\text{obs}}$ is the MLE at the observed data  point $y^{\text{obs}}$, $s(y)=\partial \ell(\theta; y)/\partial\theta|_{\theta=\hat\theta^{\text{obs}}}=\ell_\theta(\hat\theta^{\text{obs}};y)$,  and $\varphi(\theta)=\varphi(\theta;y^{\text{obs}})$ is a one-to-one reparameterisation dependent on the observed data $y^{\text{obs}}$ \citep[see also][Sect. 8.4.2]{bdr2007}. The tangent exponential model is a local exponential family model with sufficient statistic $s(y)$ and canonical parameter $\varphi$. It has the same log-likelihood function as the original model at the fixed point $y^{\text{obs}}$, where it also has the same first derivative with respect to $y$.

From Proposition \ref{p1}, the summary statistic for ABC for the tangent exponential model (\ref{TEM}) is the rescaled score, where the score is given by
\begin{eqnarray}
\ell^{\text{TE}}_\theta(\theta;y) = \ell_\theta(\theta;y^{\text{obs}})+\varphi_\theta s(y)
\ .
\label{tes}
\end{eqnarray}
For $\theta=\hat\theta^{\text{obs}}$, (\ref{tes}) reduces to $\varphi_\theta(\hat\theta^{\text{obs}}) \ell_\theta(\hat\theta^{\text{obs}};y)$, i.e.\ to a linear transformation of the score of the original model. Rescaling (\ref{tes}) then provides invariance to reparameterisation, as in Proposition \ref{p1}. This motivates the use of the score function evaluated at $\hat\theta^{\text{obs}}$ as an approximate optimal summary statistic in ABC for a general model. \\

\noindent {\bf Example 1: normal parabola}.  Let $y=(y_1,\ldots,y_n)$ be a random sample from the normal distribution $N(\theta,\theta^2)$, with $\theta>0$. The log-likelihood is
$$
\ell(\theta;y)=\frac{1}{\theta}\sum_{i=1}^n y_i -\frac{1}{2\theta^2} \sum_{i=1}^n y_i^2 - n\log\theta\,,
$$ 
where $t(y)=(\sum_{i=1}^n y_i,\sum_{i=1}^n y_i^2)$ is the two-dimensional minimal sufficient statistic. The
score function is $\ell_\theta(\theta;y)=-\theta^{-2} \sum_{i=1}^n y_i +\theta^{-3} \sum_{i=1}^n y_i^2-n\theta^{-1}$, which implies that $\hat\theta$ is the positive solution of a quadratic equation. The expected information is $i(\theta)=3n/\theta^2$, and the rescaled score is $\eta(y;\hat\theta^{\text{obs}})= \hat\theta^{\text{obs}} \ell_\theta(\hat\theta^{\text{obs}};y) /\sqrt{3n}$.

As an illustration we use a sample of size $n=50$ generated from the model, with $\theta=5$ and with a uniform prior in $(0,15)$. We consider three instances of the ABC Algorithm~\ref{alg:abc}, with distance $\rho(v,w)=||v-w||_1$ and with summary statistics given, respectively, by $t(y)$, $\eta(y;\hat\theta^{\text{obs}})$, and also a one-to-one transformation of the minimal sufficient
statistic $t(y)$, that is $t_1(y)=(\bar y,\sqrt{s^2})$, i.e. the sample mean and the standard deviation. In all three cases, we {use} the same sample of $10^7$ values generated from the prior and in each case we choose the threshold $\epsilon$ as the quantile of level $0.1\%$ of the observed distances, thus accepting $10^4$  values. These $\epsilon$ values are, respectively, $31.264$, $0.02$ and $0.237$. These values are not directly comparable, since the three statistics are not on the same scale. A possibility would be to suitably standardize $t(y)$ and $t_1(y)$, but such a standardisation is not obvious in general. On the other hand, the statistic $\eta(y;\hat\theta^{\text{obs}})$ is rescaled using the variability of the score. For vector parameters this rescaling also takes into account the correlation among the components of the statistic.

\begin{figure}[h!]
\centering
{\includegraphics[width=2.1in, height=6.6in, angle=-90]{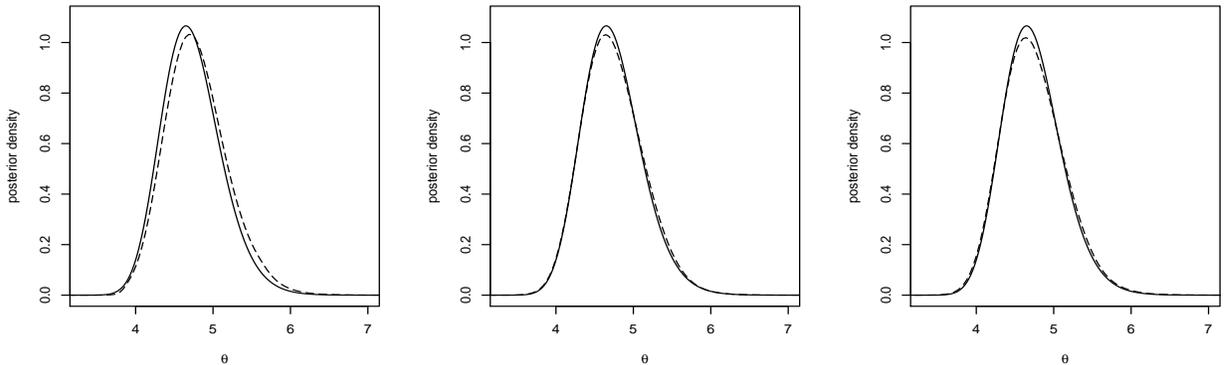}}
\caption{\small Normal parabola. In all panels the solid line corresponds to the exact posterior, while the dashed lines correspond to ABC approximations using $t(y)$ (left panel), $t_1(y)$ (central panel), and $\eta(y;\hat\theta^{\text{obs}})$ (right panel).} 
\label{normal-parabola}
\end{figure}

Figure \ref{normal-parabola}
shows the three approximations  compared with the exact posterior. The two versions of the ABC with the minimal sufficient statistic  gave quite different results, with the one with $t(y)$ leading to the worst accuracy. This is likely due to the fact that the the two components of $t(y)$ are on different scales. 
On the other hand, the ABC with the one-dimensional summary statistic $\eta(y;\hat\theta^{\text{obs}})$, which is not sufficient for this model, gave an approximation to the posterior with accuracy comparable with ABC with the minimal sufficient statistic $t_1(y)$. 

{In order to check that this behaviour is not due to the particular simulated dataset, we consider the same experiment on 50 different datasets, and for each posterior we compute the Kullback-Leibler (KL) divergence among the exact and the three approximate posteriors. A plot of the log-KL divergences is given in Figure~\ref{normal-parabola2}, which confirms the good agreement of ABC with $\eta(y;\hat\theta^{\text{obs}})$ and ABC with the minimal sufficient statistic $t_1(y)$, but not with the minimal sufficient statistic $t(y)$.

\begin{figure}[h!]
\centering
{\includegraphics[scale=0.7, angle=-90]{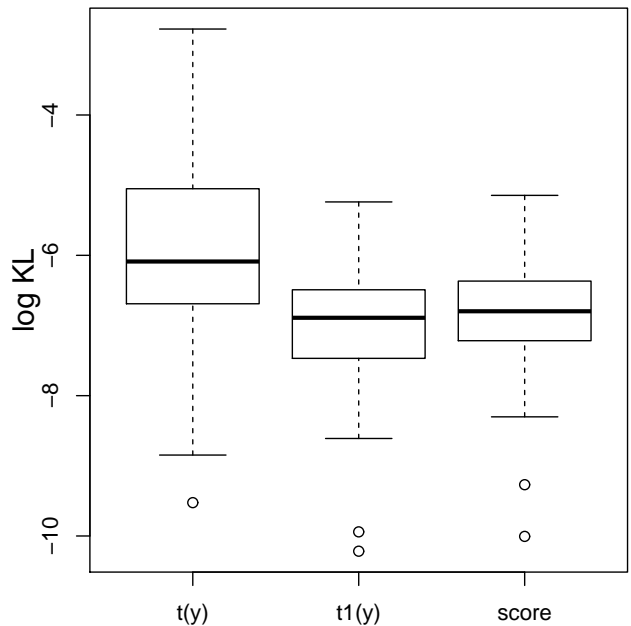}}
\caption{Kullback-Leibler divergences (logarithmic scale) among the exact and ABC posteriors using $t(y)$, $t_1(y)$ and $\eta(y;\hat\theta^{\text{obs}})$, over 50 replicated datasets for the normal parabola.}
\label{normal-parabola2}
\end{figure}
}
~\vspace{0.5cm}

\noindent {\em Remark 1.}  From the point of view of the likelihood principle, the different performances of the ABC algorithm  with {both} versions of the minimal sufficient statistic in Example 1 is  unpleasant. Indeed,
$t(y)$ and $t_1(y)$  lead to the same likelihood function and posterior distribution, but the corresponding ABC approximations could be remarkably different. {Hence, transforming the summary statistic may have a great impact on the quality of the ABC approximation. Finding the right transformation may not be straightforward, especially when the summary statistic is high-dimensional. This issue has already been recognised in the ABC literature. For instance, \cite{HsuanJung2011ga} propose to weight the components of the summary statistic using  a genetic algorithm, though the method seems computationally quite intensive}. On the contrary, since the likelihood and the score functions are not affected by one-to-one transformations of the data, or of the minimal sufficient statistic, ABC with $\eta(y;\hat\theta^{\text{obs}})$ is invariant with respect to such transformations. This  invariance to data transformations adds to  the parameterisation invariance proved in Proposition \ref{p1}.
~\vspace{0.5cm}

\noindent {\em Remark 2.} Although the choice of the distance function $\rho(\cdot,\cdot)$ in the ABC algorithm  is arbitrary, when considering the Euclidean distance we have
\begin{eqnarray*}
\rho \left( \eta(y;\hat\theta^{\text{obs}}),\eta(y^{\text{obs}};\hat\theta^{\text{obs}})\right) & = &
|| \eta(y;\hat\theta^{\text{obs}})||_2^{1/2} = \left\{ \ell_\theta (\hat\theta^{\text{obs}};y)^{\T} i(\hat\theta^{\text{obs}})^{-1} \ell_\theta(\hat\theta^{\text{obs}};y) \right\}^{1/2}
\ ,
\end{eqnarray*}
which is {the square root of} the score test statistic computed in $\hat\theta^{\text{obs}}$, based on data $y$.
~\vspace{.5cm}

Despite the good properties of ABC with the score function,  in typical applications of the ABC method the likelihood function is intractable, and therefore the same is true for the score function. This motivates the extension to composite likelihoods proposed in the next section.

\subsection{ABC with composite score function}\label{ABC-CS}

When dealing with complex models, possible surrogates of the unavailable full likelihood are given by composite likelihoods. Extending the results of the previous section, we propose the rescaled composite score function as a summary statistic in ABC. This defines an algorithm, called ABC-cs. In terms of the ABC Algorithm~\ref{alg:abc}, ABC-cs replaces the matching condition
$$
\rho(\eta(y),\eta(y^{\text{obs}}))\leq\epsilon
 ,
$$
with
\begin{eqnarray}
\rho \left( \eta_c (\tilde\theta^{\text{obs}};y),\eta_c(\tilde\theta^{\text{obs}};y^{\text{obs}}) \right) \leq\epsilon
\ ,
\label{distc}
\end{eqnarray} 
where $\tilde\theta^{\text{obs}}$ is the MCLE  computed with $y^{\text{obs}}$ and 
\begin{eqnarray}
\eta_c (\tilde\theta^{\text{obs}};y) = B_c(\tilde\theta^{\text{obs}})^{-1} c\ell_\theta(\tilde\theta^{\text{obs}};y)
\ 
\label{etac}
\end{eqnarray}
is the rescaled composite score, with $B_c(\theta)$ such that $J(\theta)=B_c(\theta) B_c(\theta)^{\T}$. Since  $c\ell_\theta(\tilde{\theta}^{\text{obs}};y^{\text{obs}})=0$,  in (\ref{distc}) we only need to evaluate $\eta_c(\tilde{\theta}^{\text{obs}};y)$.

The following theorem shows that the proposed ABC-cs algorithm gives an approximate posterior distribution with the correct curvature, in the sense discussed at the end of Section \ref{compositelikelihoods}, even if the rescaled composite score function (\ref{etac}), unlike the full score function, does not satisfy the information identity.

\begin{theorem}
The ABC-cs algorithm with the rescaled composite score statistic $\eta_c (\tilde\theta^{\text{obs}};y),$ as $\varepsilon \to 0$, leads to an approximate posterior distribution with the correct curvature and is also invariant to reparameterisations.
\end{theorem}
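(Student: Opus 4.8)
The plan is to treat the two assertions separately, the curvature claim being the substantive one and the invariance claim being a direct analogue of Proposition~\ref{p1}.

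\textbf{Limiting form of the ABC-cs posterior.} First I would invoke the general fact recalled in Section~2.1: as $\varepsilon\to0$ the ABC-cs sampler draws from $\pi_{0}(\theta)\propto\pi(\theta)\,p(\eta_{c}^{\text{obs}}\mid\theta)$, where $p(\cdot\mid\theta)$ is the sampling density of the statistic $\eta_{c}(\tilde\theta^{\text{obs}};Y)$ when $Y\sim f(\cdot;\theta)$ and $\eta_{c}^{\text{obs}}=\eta_{c}(\tilde\theta^{\text{obs}};y^{\text{obs}})$. Since $\tilde\theta^{\text{obs}}$ solves the composite score equation at $y^{\text{obs}}$ we have $\eta_{c}^{\text{obs}}=0$, so the whole problem reduces to describing $p(0\mid\theta)$ as a function of $\theta$; note that here $\tilde\theta^{\text{obs}}$ is a fixed constant depending only on the observed data, so no complication arises from it being an estimator.

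\textbf{Gaussian approximation of the rescaled composite score.} The core step is a Bernstein--von Mises / Laplace-type argument. Under the regularity conditions for composite likelihood recalled in Section~\ref{compositelikelihoods}, Taylor-expanding $E_{\theta}\{c\ell_{\theta}(\vartheta;Y)\}$ about $\vartheta=\theta$ and using unbiasedness, $E_{\theta}\{c\ell_{\theta}(\theta;Y)\}=0$, together with $E_{\theta}\{\partial c\ell_{\theta}(\theta;Y)/\partial\theta^{\T}\}=-H(\theta)$, shows that for $\vartheta$ within $O(n^{-1/2})$ of $\theta$ the composite score $c\ell_{\theta}(\vartheta;Y)$ is approximately Gaussian with mean $H(\theta)(\theta-\vartheta)$ and, to leading order, variance $J(\theta)$. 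Setting $\vartheta=\tilde\theta^{\text{obs}}$, premultiplying by $B_{c}(\tilde\theta^{\text{obs}})^{-1}$, and using that $\tilde\theta^{\text{obs}}$ is consistent, so that $J(\theta)\approx J(\tilde\theta^{\text{obs}})=B_{c}(\tilde\theta^{\text{obs}})B_{c}(\tilde\theta^{\text{obs}})^{\T}$ over the effective support of the posterior, the statistic $\eta_{c}(\tilde\theta^{\text{obs}};Y)$ is approximately $N\big(B_{c}(\tilde\theta^{\text{obs}})^{-1}H(\theta)(\theta-\tilde\theta^{\text{obs}}),\,I_{d}\big)$.

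\textbf{Correct curvature.} Evaluating this density at $\eta_{c}^{\text{obs}}=0$ gives
\[
p(0\mid\theta)\ \propto\ \exp\!\Big\{-\tfrac12(\theta-\tilde\theta^{\text{obs}})^{\T}H(\theta)^{\T}J(\tilde\theta^{\text{obs}})^{-1}H(\theta)(\theta-\tilde\theta^{\text{obs}})\Big\}.
\]
Since $H$ is symmetric (an expected Hessian) and $H(\theta)\approx H(\tilde\theta^{\text{obs}})$ near $\tilde\theta^{\text{obs}}$, the exponent equals $-\tfrac12(\theta-\tilde\theta^{\text{obs}})^{\T}G(\tilde\theta^{\text{obs}})(\theta-\tilde\theta^{\text{obs}})$ with $G=HJ^{-1}H=V^{-1}$ the Godambe information. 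Hence, for a flat prior --- or, asymptotically, any prior negligible on the $n^{-1/2}$ scale --- the limiting ABC-cs posterior is approximately $N(\tilde\theta^{\text{obs}},G(\tilde\theta^{\text{obs}})^{-1})$, so its curvature is the Godambe information, i.e.\ the correct asymptotic precision $V(\theta)^{-1}$ of the MCLE, rather than the sensitivity $H(\tilde\theta^{\text{obs}})$ obtained by inserting the composite likelihood directly into Bayes' formula, or the approximate correction of \citet{pauli2011composite}. This is the sense in which ABC-cs adjusts the curvature despite $\eta_{c}$ not satisfying the information identity.

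\textbf{Invariance and main obstacle.} The reparameterisation invariance follows exactly as in Proposition~\ref{p1}: under a smooth one-to-one map $\theta=\theta(\lambda)$ with $\theta_{\lambda}=\partial\theta(\lambda)/\partial\lambda$ one has $c\bar\ell_{\lambda}(\lambda;y)=\theta_{\lambda}^{\T}c\ell_{\theta}(\theta(\lambda);y)$, $\bar J(\lambda)=\theta_{\lambda}^{\T}J(\theta(\lambda))\theta_{\lambda}$, hence $\bar B_{c}(\lambda)=\theta_{\lambda}^{\T}B_{c}(\theta(\lambda))$ up to a right orthogonal factor, and the MCLE is equivariant (since $\theta_{\lambda}$ is nonsingular, the composite score equations are equivalent), $\tilde\lambda^{\text{obs}}=\lambda(\tilde\theta^{\text{obs}})$; therefore $\bar\eta_{c}(\tilde\lambda^{\text{obs}};y)=\eta_{c}(\tilde\theta^{\text{obs}};y)$, and since for Euclidean $\rho$ the acceptance set in \eqref{distc} depends on $\eta_{c}$ only through the quadratic form $c\ell_{\theta}(\tilde\theta^{\text{obs}};y)^{\T}J(\tilde\theta^{\text{obs}})^{-1}c\ell_{\theta}(\tilde\theta^{\text{obs}};y)$ (cf.\ Remark~2), it is also unaffected by the orthogonal ambiguity in $B_{c}$; thus the ABC-cs posterior is invariant. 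The delicate point throughout is the Gaussian approximation of the rescaled composite score: it is a first-order large-$n$ statement, and a fully rigorous version would require explicit composite-likelihood regularity conditions, control of the interchange of the $\varepsilon\to0$ and $n\to\infty$ limits, uniform (over the effective posterior support) versions of the continuity arguments $J(\theta)\approx J(\tilde\theta^{\text{obs}})$ and $H(\theta)\approx H(\tilde\theta^{\text{obs}})$, and the standard asymptotic negligibility of the prior; I would present the result at this asymptotic level, which is precisely what ``approximate posterior with the correct curvature'' is meant to convey.
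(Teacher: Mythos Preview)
Your proof is correct but follows a genuinely different route from the paper's. The paper argues algebraically: it introduces the adjusted composite score $g(\theta;y)=H(\theta)J(\theta)^{-1}c\ell_\theta(\theta;y)$, checks directly that $H_g(\theta)=J_g(\theta)=G(\theta)$ so that $g$ satisfies the information identity like a genuine likelihood score, and then shows the pointwise identity $\eta_g(\theta;y)=B_g(\theta)^{-1}g(\theta;y)=\eta_c(\theta;y)$. Thus ABC-cs is \emph{exactly} ABC based on the rescaled score of an estimating function with the right second-order structure, and the ``correct curvature'' claim is inherited from the motivation in Section~\ref{abs-score} without any explicit large-$n$ limit. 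You instead compute the limiting ABC posterior directly, via a Gaussian approximation to the sampling law of $\eta_c(\tilde\theta^{\text{obs}};Y)$, arriving at the explicit form $N(\tilde\theta^{\text{obs}},G(\tilde\theta^{\text{obs}})^{-1})$. The paper's argument is shorter and sidesteps the asymptotic issues you candidly flag (interchange of $\varepsilon\to0$ and $n\to\infty$, uniformity of $H(\theta)\approx H(\tilde\theta^{\text{obs}})$, prior negligibility), at the cost of leaving ``correct curvature'' somewhat implicit; your argument makes the meaning concrete---posterior precision equals the Godambe information---and clarifies exactly at what level of approximation the statement holds. The invariance part is essentially the same in both proofs, though your remark on the orthogonal ambiguity in $B_c$ and its irrelevance for the Euclidean acceptance region is a useful addition not spelled out in the paper.
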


\begin{proof}
In order to recover the information identity, and thus the correct curvature, it is necessary to consider the adjusted composite score function \citep[see, e.g.,][Chap. 4]{pacesalvan97}
$$
g(\theta;y)=H(\theta)J(\theta)^{-1} c\ell_\theta(\theta;y)=A(\theta)c\ell_\theta(\theta;y)
\ .
$$
 Indeed,
for $g(\theta;y)$ we have 
\begin{eqnarray*}
J_g(\theta) \,=\, \mbox{var}_\theta\{g(\theta;Y)\} = A(\theta)\mbox{var}_\theta\{c\ell_\theta(\theta;Y)\}A(\theta)^\T=G(\theta)
\end{eqnarray*}
and, using $E_\theta\{c\ell_\theta(\theta;Y)\}=0$,
\begin{eqnarray*}
H_g(\theta) \, &=&\, E_\theta\left\{-\frac{\partial}{\partial\theta^\T}g(\theta;Y)\right\} = 
- A(\theta)E_\theta\left\{\frac{\partial}{\partial\theta^\T}c\ell_\theta(\theta;Y)\right\}=G(\theta)\,.
\end{eqnarray*}
Since $H_g(\theta)=J_g(\theta)=G(\theta)$, the adjusted composite score $g(\theta;y)$ satisfies the information identity as a proper score function and, since {$|A(\theta)| \neq 0$,  $g(\theta;y)=0$} leads to {the same solution $\tilde\theta$ of the estimating equation} $c\ell_\theta(\theta;y)=0$.

The ABC-cs algorithm should then be based on the rescaled version of $g(\theta;y)$, given by
$$
\eta_g (\tilde\theta^{\text{obs}};y) = B_g (\tilde\theta^{\text{obs}})^{-1} g(\tilde\theta^{\text{obs}};y)
\ ,
$$
where $B_g(\theta)=H(\theta) \{B_c(\theta)^{\T}\}^{-1}$. Indeed,
$$
G(\theta)= H(\theta) J(\theta)^{-1} H(\theta) = H(\theta) \{B_c(\theta)^{\T}\}^{-1} B_c(\theta)^{-1} H(\theta)
\ .
$$
However, it is straightforward to see that 
$$
\eta_g (\theta;y) = B_g (\theta)^{-1} g(\theta;y) = B_c(\theta)^{\T} H(\theta)^{-1} H(\theta) J(\theta)^{-1} c\ell_\theta(\theta;y) = \eta_c(\theta;y)
\ .
$$
This proves that the use of $\eta_c(\tilde\theta^{\text{obs}};y)$ as a summary statistic for ABC leads to an approximate posterior with the correct curvature.  

The proof of invariance to reparameterisation follows the same steps as in Proposition 3.1.$\blacksquare$
\end{proof}

An advantage of ABC-cs is that the rescaled composite score statistic has the same dimension as $\theta$.  Moreover, since the score statistic is obtained from the composite log-likelihood by just taking the first derivative, it is easily computed, especially when it is analytically available.  An apparent drawback of (\ref{etac}) is the implicit dependence of the ABC-cs algorithm on $J(\theta)$. However, only $J(\tilde\theta^{\text{obs}})$ is needed, and this quantity can be easily approximated with a preliminary Monte Carlo simulation from the model with $\theta = \tilde\theta^{\text{obs}}$, with {few hundred replications} \citep{cattelan14}. Finally, note that  even in this case, the squared Euclidean distance gives the composite score test statistic evaluated in $\tilde\theta^{\text{obs}}$, based on data $y$.

The ABC-cs algorithm delivers an approximate posterior distribution which does not need calibration, whereas Bayesian composite posteriors depend crucially on such quantities. Moreover, even when rescaled, the Bayesian composite posterior (\ref{postc}) often leads to less accurate results, as also shown in the examples of Section 4  and in the application of Section 5.

As a final remark, we note that the proposal of this paper is not providing  an automatic  summary statistic for ABC, in the sense that an appropriate choice of composite likelihood for the problem under investigation must be made. The composite likelihood may be difficult, if not impossible, to define  in some applications \citep[see, for instance,  the non-Markovian queueing model analysed by][]{heggland2004}, while in other situations there could be different competing composite likelihoods available for the same model. The latter case will be addressed more in detail in the final discussion. The point here is that, when there is at least one composite likelihood available, it is usually defined starting from relevant stochastic features of the model and therefore the summary statistic based on the composite score will automatically incorporate these features. Moreover, there is an extensive, and growing, frequentist literature on composite likelihoods \citep[see, for instance, the review by][]{varin2011overview}, that can be used to guide the choice of a sensible composite likelihood in specific classes of models.

\section{Examples}

In the  examples below we use composite marginal likelihood functions \citep{cox2004note}, although different model structures might lead to different choices of suitable composite likelihoods.
We use the Godambe information $G(\tilde\theta^{\text{obs}})$ as a precision matrix for both  ABC and ABC-cs with importance sampling. Note that ABC with MCMC or Sequential Monte Carlo (SMC) methods requires a similar precision matrix, which in practice is estimated by considering preliminary runs of ABC (in the case of MCMC) or from a previous population of ABC particles (in the case of SMC). The \texttt{R} code for the examples of this section and for the application in Section~5 can be found in the Supplementary Material.\\

\noindent {\bf Example 2: equi-correlated normal model}\\
This  example 
focuses on  inference based on the pairwise log-likelihood (\ref{pl}) for the parameters of an equi-correlated multivariate normal distribution, with mean vector $\mu$ and covariance matrix $\Sigma_{rs}=\rho\sigma^2$, for $r\neq s$, and $\Sigma_{rr}=\sigma^2$, $r,s = 1,\ldots,q$. For this model, $\tilde\theta$ is {fully} efficient, the sufficient statistic is three-dimensional and is the same for both the full and pairwise likelihoods \citep{pace2011adjusting}. The pairwise log-likelihood (\ref{pl}) for $\theta=(\mu,\sigma^2,\rho)$ is
\begin{align*}
p\ell(\theta;y)=-&\frac{nq(q-1)}{2}\log \sigma^{2} - \frac{nq(q-1)}{4}\log(1-\rho^{2}) - \frac{q-1+\rho}{2\sigma^{2}(1-\rho^{2})}SS_{W}\\
-&\frac{q(q-1)SS_{B}+nq(q-1)(\bar y -\mu)^{2}}{2\sigma^{2}(1+\rho)},
\end{align*}
where $SS_W = \sum_{i=1}^n \sum_{r=1}^q (y_{ir}-\bar y_i)^2$, $SS_B=\sum_{i=1}^n(\bar y_i - \bar y)^2$, $\bar y_{i}=\sum_{r=1}^{q} y_{ir}/q$ and $\bar{y} = \sum_{i=1}^{n}\sum_{r=1}^{q}y_{ir}/(nq)$. For the expression of the score function see \citet[p. 145]{pace2011adjusting}.

We assume that the components of the parameter $\omega = (\mu,\tau,\kappa)$, with $\tau = \log \sigma^2$ and  $\kappa = \text{logit}(\{\rho(q-1)+1\}/q)$, are independent, with $N(0,100)$ marginal prior distributions.
 
As an illustration, we use a sample of $n=30$ drawn from the model with $q=50$, $\mu=0$, $\sigma^2=1$ and $\rho=0.5$. For ABC we used the sufficient statistic $(\bar{y}, \sqrt{SS_B}, \sqrt{SS_w})$, which gave better results than the alternative form $(\bar{y}, {SS_B}, {SS_w})$, while for ABC-cs the summary statistic is given by \eqref{etac}. The simulation from the ABC and ABC-cs posteriors is performed with importance sampling, where the importance function is the multivariate $t$-student distribution with 5 degrees of freedom, centred at $\tilde{\theta}^{\mathrm{obs}}$ and with scale matrix equal to $5V(\tilde\theta^{\text{obs}})$. We consider $10^3$ final samples obtained with $\epsilon$ fixed at the $0.1\%$ quantile of the observed distances. Finally, in order to get rid of the importance weights, here and elsewhere, we consider resampling with replacement of the simulated values.

Results are compared also with the pairwise posterior
\begin{equation}
\pi_{pl} (\theta|y) \propto \pi(\theta) \exp\{p\ell(\theta;y)\}
\ ,
\label{plnc}
\end{equation}
with the pairwise posterior \eqref{postc} based on the calibrated pairwise likelihood and with the posterior distribution based on the full likelihood, approximated by a random walk Metropolis.

The boxplots of the marginal posterior approximations are shown in Figure~\ref{fig42-1}, which  highlights several interesting features. The posterior \eqref{plnc} appears wrongly too concentrated \citep[see also][]{pauli2011composite,smith2009extended,ribatet.etal2012}, whereas the calibrated pairwise posterior \eqref{postc} may have the  opposite problem. Indeed, while the marginal calibrated pairwise posteriors of $\mu$ and $\tau$ are quite similar to the full posterior (MCMC), the marginal calibrated pairwise posterior of $\kappa$ shows higher dispersion than the corresponding marginal based on the full likelihood.
\begin{figure}[h!]
\centering
{\includegraphics[width=2in, height=2.15in, angle=-90]{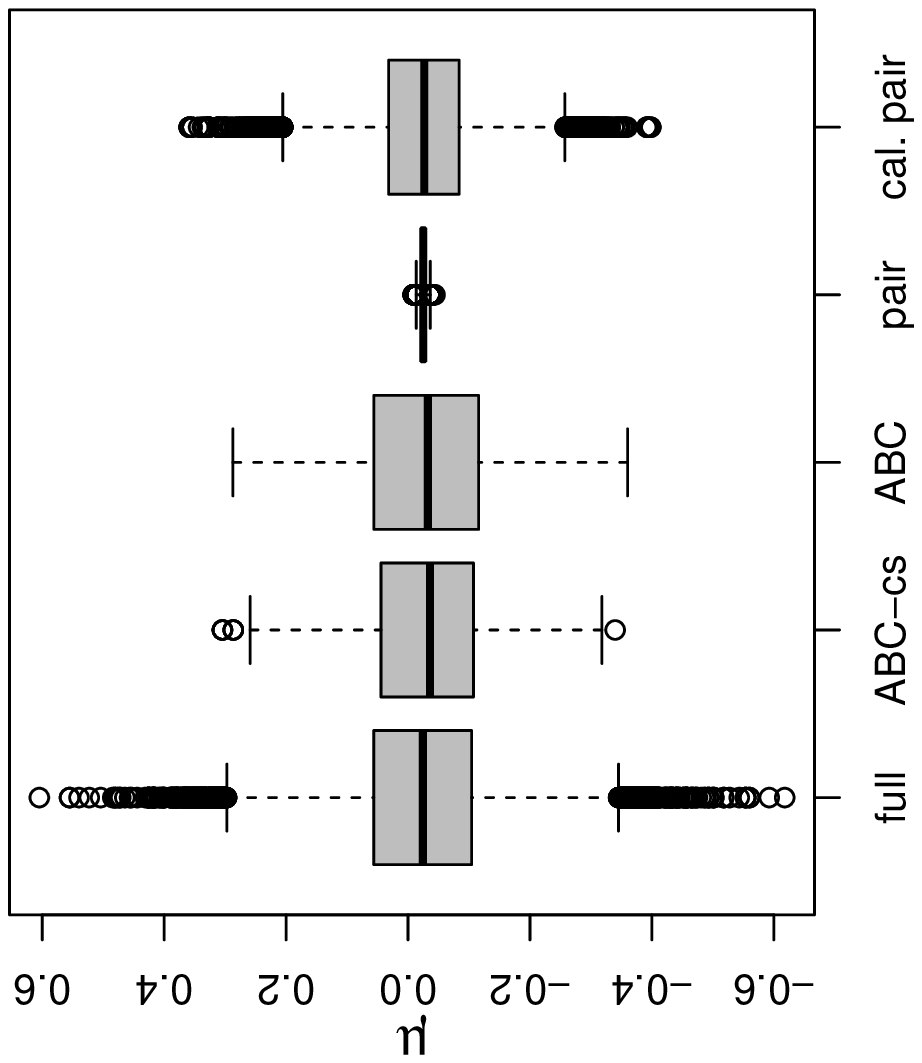}}
{\includegraphics[width=2in, height=2.1in, angle=-90]{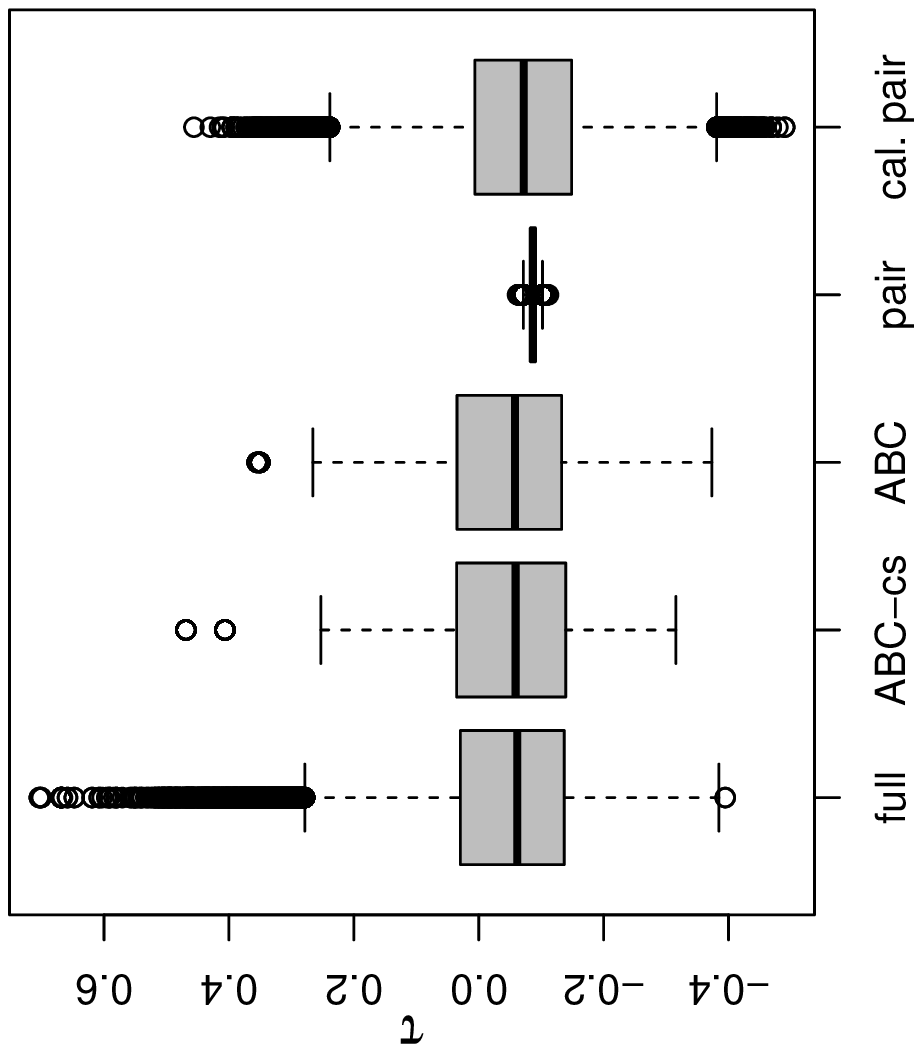}}
{\includegraphics[width=2in, height=2.15in, angle=-90]{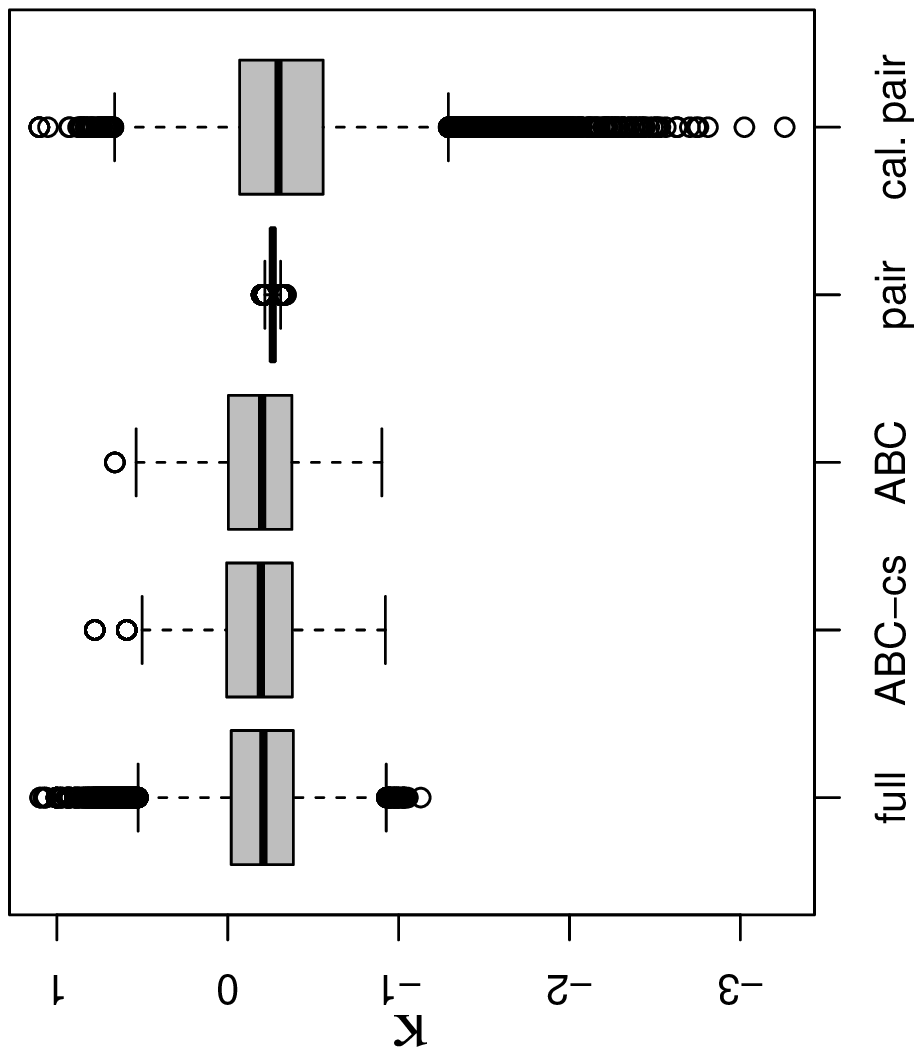}}
\caption{\small Equi-correlated normal model. ABC-cs posterior compared with
       the full, the pairwise (pair), the calibrated pairwise (cal. pair) and the ABC posteriors.}\label{fig42-1}
\end{figure}
On the other hand, ABC-cs and ABC marginal posteriors are all quite similar to the full posterior. This is not surprising, since the model is a full exponential family of order three and ABC uses exactly the sufficient statistic as summary statistic. Moreover, even the pairwise likelihood has exponential form, with the same sufficient statistic. This implies that the pairwise score function is  proportional to the score function of the full model \citep{pagui2015full} and the latter would lead again to the sufficient statistic (see Section \ref{abs-score}).

We also compare the posterior means of the full, ABC,  ABC-cs and the calibrated pairwise posteriors in a simulation study, over 100 Monte Carlo trials. The data are generated from the model with $\mu=0$, $\sigma^2=1$, $\rho=0.2$.  Figure~\ref{fig42:2} indicates that ABC and ABC-cs posterior means are quite similar to the full posterior mean, as expected from Proposition 3.1. On the contrary, for the transformed correlation parameter $\kappa$ the mean of the calibrated pairwise posterior can perform poorly.  The behavior of the calibrated pairwise likelihood is due to the fact that the overall rescaling, computed at the mode, does not generally guarantee accuracy in the tails. Simulations for other parameter configurations can be found in the Supplementary Material.

\begin{figure}[h!]
\centering
{\includegraphics[width=2.1in, height=2.15in, angle=-90]{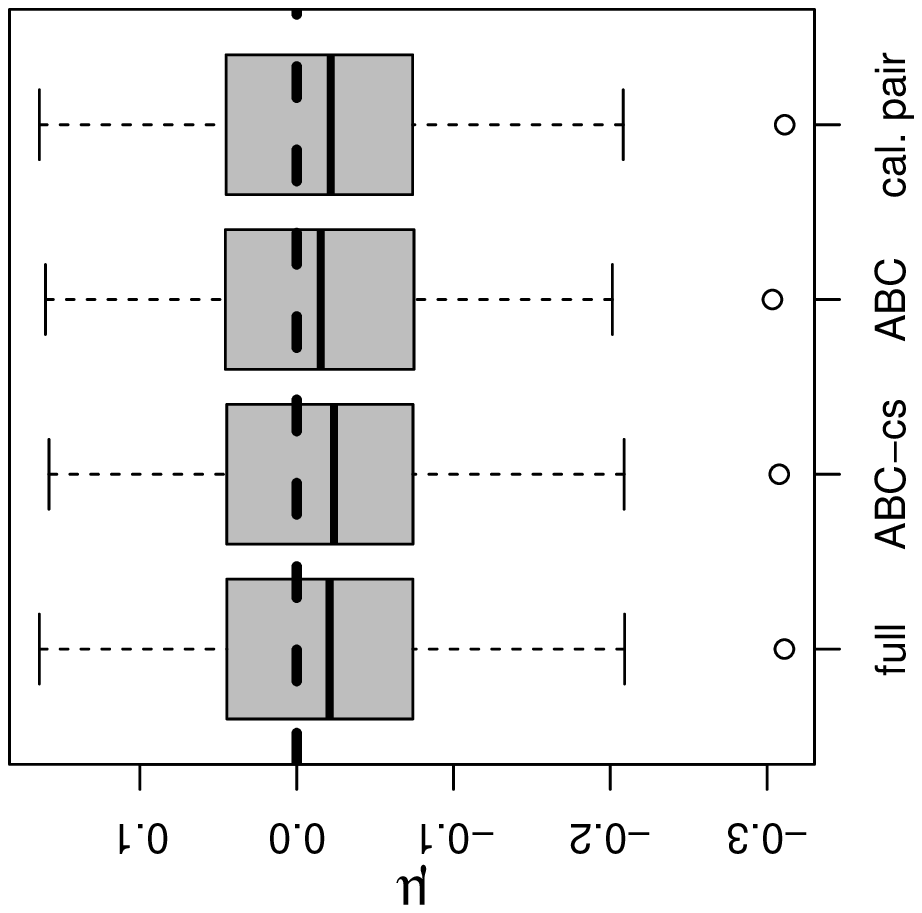}
\includegraphics[width=2.1in, height=2.1in, angle=-90]{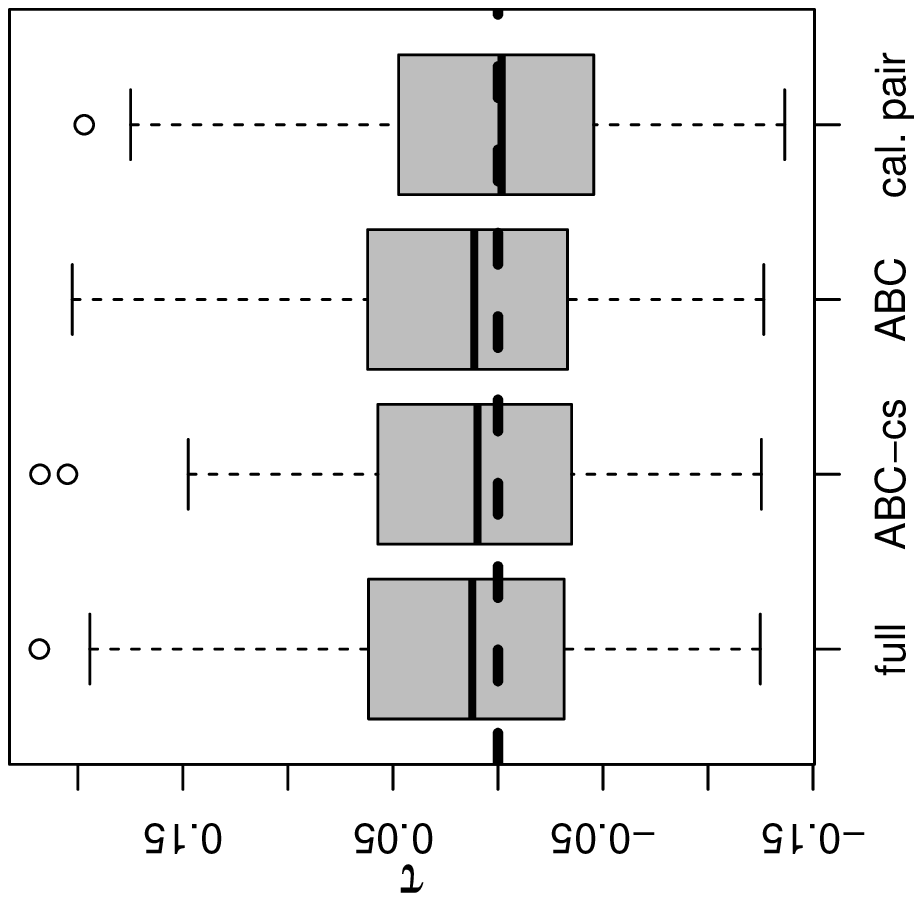}
\includegraphics[width=2.1in, height=2.15in, angle=-90]{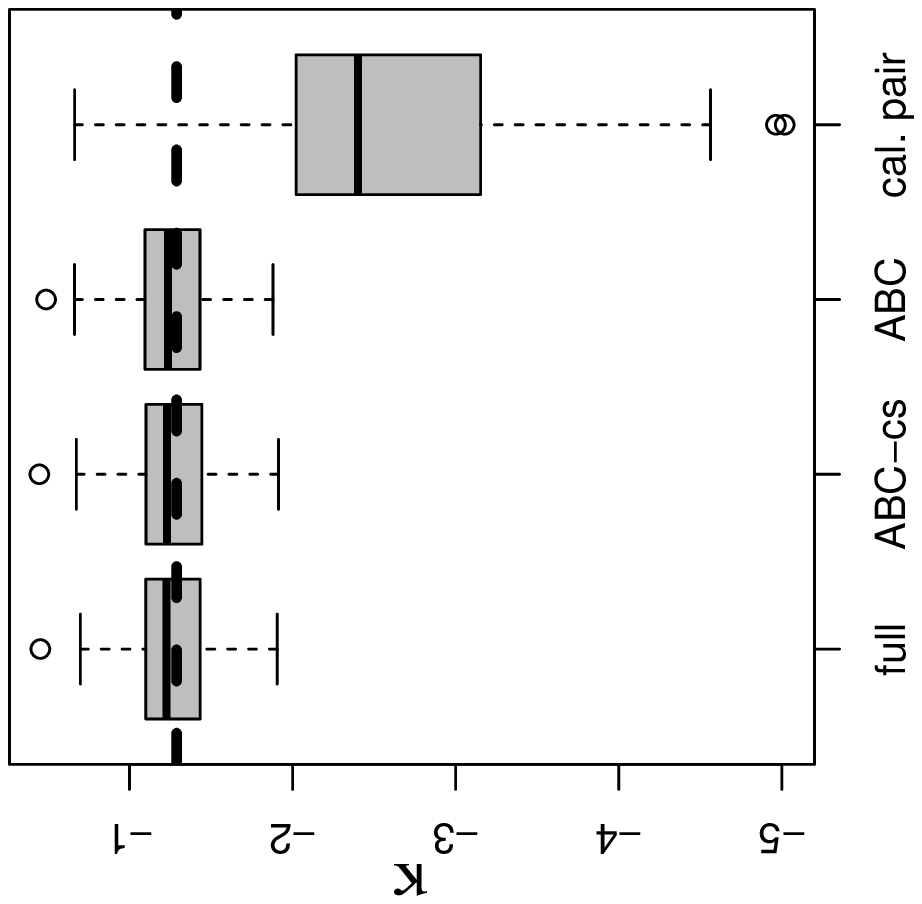}}\\
\caption{\small Equi-correlated normal model. Simulation study based on 100 Monte Carlo trials, with $\mu=0$, $\sigma=1$ ($\tau=0$) and $\rho=0.2$ ($\kappa \approx -1.15$). The dashed horizontal lines represent the true parameter values. }
\label{fig42:2}
\end{figure}
\hspace{0.3cm}

\noindent{\bf Example 3: multivariate probit model} \\
The pairwise likelihood is particularly useful for modelling correlated binary outcomes, as discussed in \cite{le1994logistic}. Correlated binary data typically arise in the context of repeated measurements on the same individual. Standard likelihood analysis in these contexts may be difficult because it involves multivariate integrals whose dimension equals the cluster sizes. 

Let us focus on a multivariate probit model with constant cluster sizes. In particular, let $S_i=(S_{i1},\ldots,S_{iq})$ be a latent  normal random variable with mean $\gamma_{i}$ and covariance matrix $\Sigma$, with $\Sigma_{hh}=1+\sigma^{2}$, $\Sigma_{hk}=\sigma^{2}$, $h\neq k$, $h,k=1,\ldots,q$. We assume 
$\gamma_{i}=X_{i}\beta$, where $\beta$ is a vector of unknown regression coefficients and $X_{i}$ is the design matrix for unit $i$, $i=1,\ldots,n$,
Then, the observed data $Y_{ih}$ is equal to 1 if $S_{ih}>0$, and 0 otherwise.

The full likelihood is computationally cumbersome since it entails calculation of multiple integrals of a $q$-variate multivariate normal distribution. On the other hand, the pairwise log-likelihood is
\[
p\ell(\beta,\sigma^2;y) = \sum_{i=1}^{n}\sum_{h=1}^{q-1}\sum_{k=h+1}^{q}\log \Pr(Y_{ih}=y_{ih},Y_{ik}=y_{ik};\beta,\rho), \qquad y_{ih}, y_{ik} \in \{0,1\}\,,
\]
where, for instance,  $\Pr(Y_{ih}=1,Y_{ik}=1;\beta,\rho)=\Phi_{2}(\gamma_{ih},\gamma_{ik};\rho)$ is the standard bivariate normal distribution, with correlation $\rho=\sigma^2/(1+\sigma^2)$ and with $\gamma_{ih}=x_{ih}\beta/\sqrt{1+\sigma^2}$ the $h$th component of $\gamma_{i}$ \citep[see, for instance,][]{cattelan14}.

As an example, we consider data generated with $\beta_{0}=0.5$, $\beta_{1}=1.5$, $\sigma^2=1$,  $n=30$ and $q=10$, where $\beta_0$ is the intercept and $\beta_1$ the coefficient of a covariate generated from the uniform distribution in $(-1,1)$.  For the parameter $\theta = (\beta_0,\beta_1,\log\sigma^2)$ a trivariate normal prior with independent components $N(0, 100)$ is assumed. For ABC we take the counts at each time point $h$, $h=1,\ldots,q$, as a $q$-dimensional summary statistic. Hence, the absolute norm of the difference among the statistics is $\sum_{h=1}^{q}|\sum_{i=1}^{n}(y^{\text{obs}}_{ih}-y_{ih})|$. Other choices of the summary statistic led to less accurate results. For ABC-cs, we consider the rescaled pairwise score, evaluated at $\tilde{\theta}^{\mathrm{obs}}$. The matrices $J(\tilde\theta^{\text{obs}})$ and $H(\tilde\theta^{\text{obs}})$ were computed by simulation with 1000 datasets taken from the model with $\theta = \tilde{\theta}^{\text{obs}}$. 
We consider $10^3$ final samples drawn from the ABC and ABC-cs posteriors after fixing $\epsilon$ to the $0.1\%$ quantile of the observed distances. 
The sampling is done via importance sampling, with a multivariate $t$-student importance density, with 5 degrees of freedom, centred at $\tilde{\theta}^{\mathrm{obs}}$ and with scale matrix equal to $5V(\tilde\theta^{\text{obs}})$. We compare the results also with the full posterior approximated by the MCMC method of \cite{chib1998analysis}, and with the pairwise and the calibrated pairwise posteriors approximated by usual MCMC. All MCMC approximations are based on $3\times 10^4$ posterior samples, of which the first 5000 values are discarded.

Figure \ref{fig43-1} shows that the ABC-cs method gives a better approximation than ABC with the chosen summary statistic, when compared to the full posterior computed by MCMC. On the other hand, the non calibrated pairwise posterior is overly concentrated, whereas the calibrated pairwise posteriors of $\beta_0$ and $\log\sigma^2$ seem too dispersed.

\begin{figure}[h!]
\centering
{\includegraphics[width=2in, height=2.15in, angle=-90]{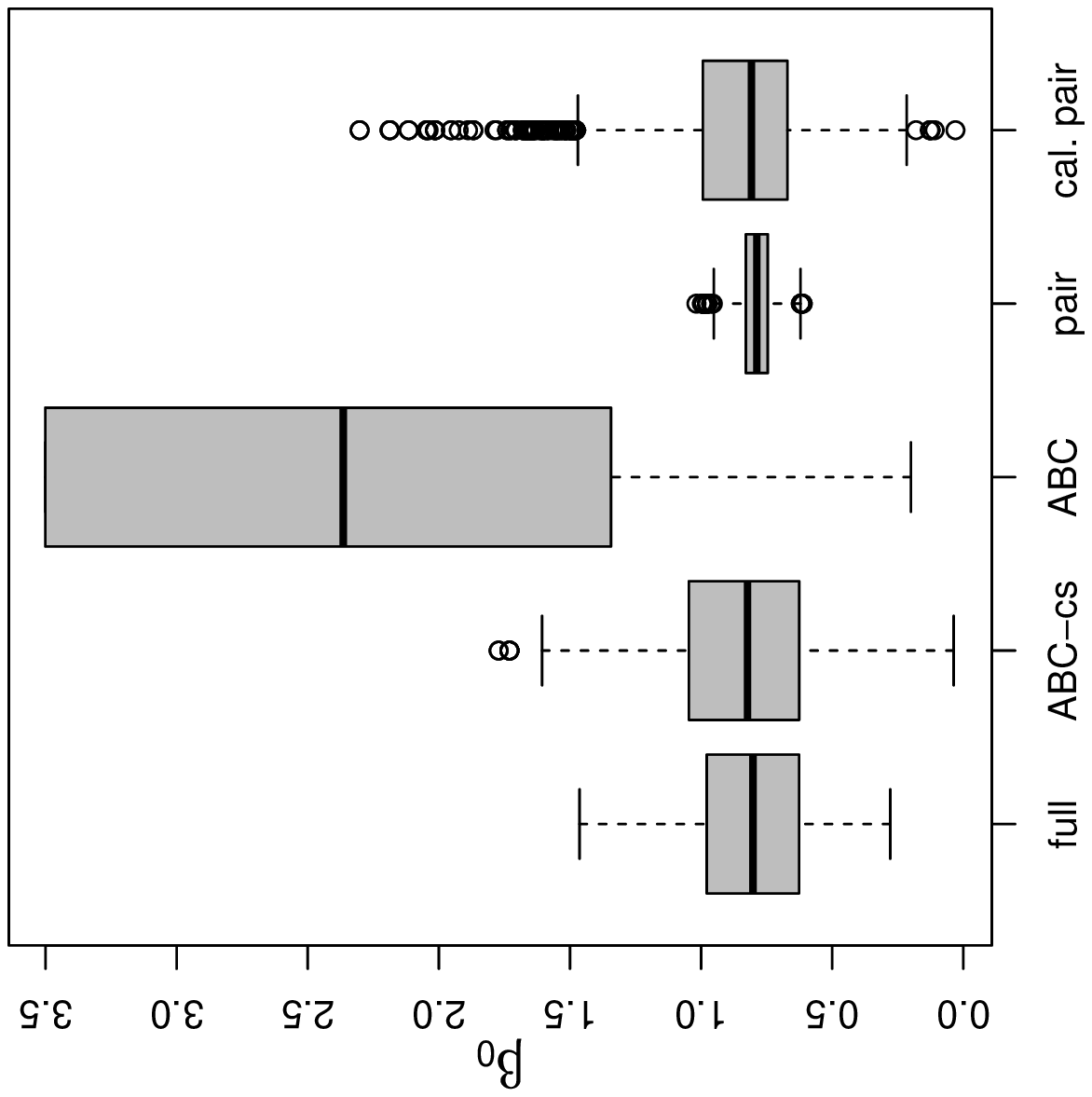}}
{\includegraphics[width=2in, height=2.1in, angle=-90]{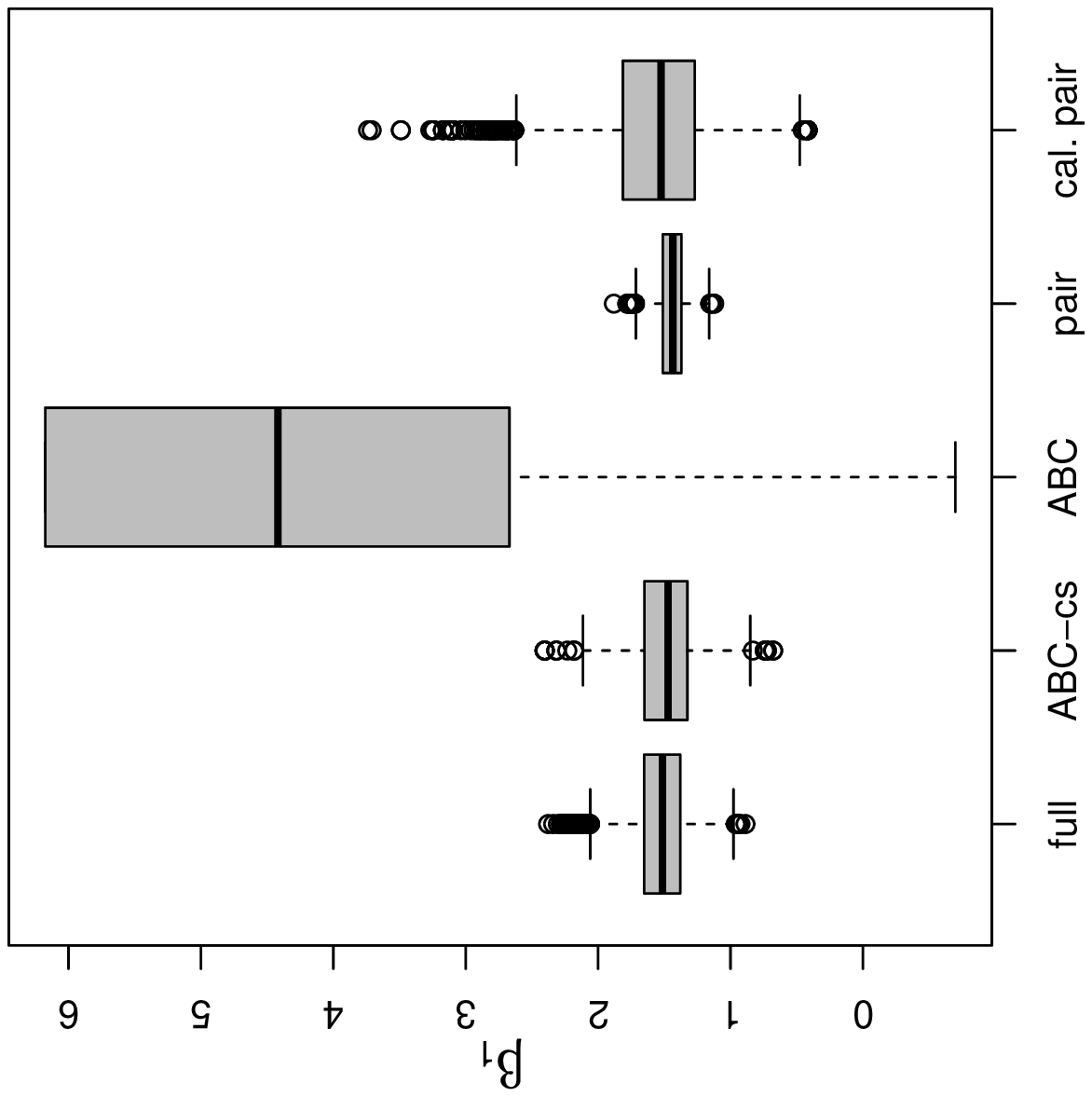}}
{\includegraphics[width=2in, height=2.15in, angle=-90]{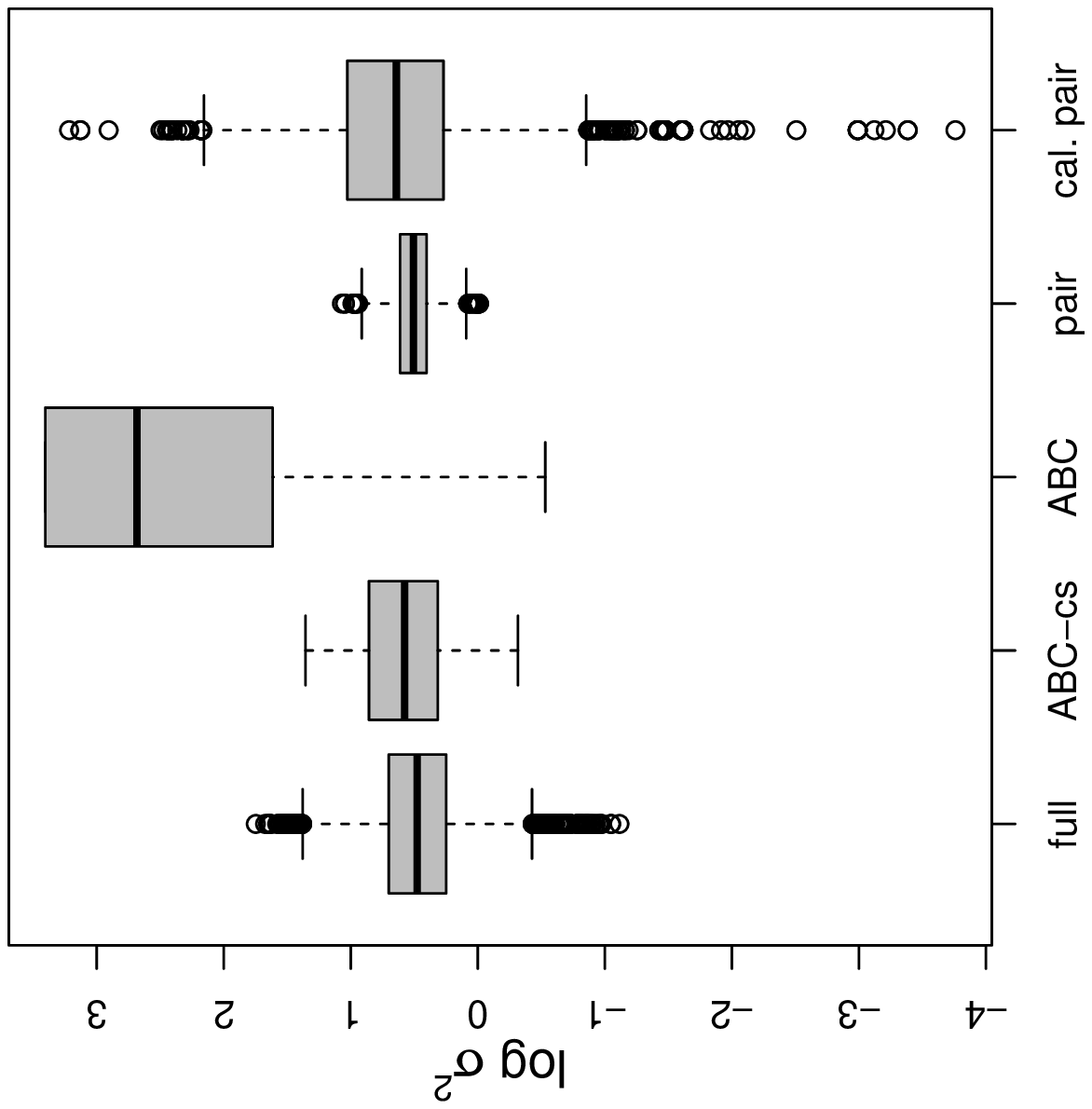}}
\caption{\small Multivariate probit model. ABC-cs posterior compared with the ABC, the pairwise (pair), the calibrated pairwise (cal. pair) and the full posteriors.}
\label{fig43-1}
\end{figure}

A simulation study is conducted over 100 Monte Carlo samples, where the data are simulated as above, with $\beta_0=0.5$, $\beta_1=1.5$, $n=30$, $q=10$ and $\sigma^2=4$. For each simulated dataset, we consider the mean of the ABC, ABC-cs, calibrated pairwise and full posteriors.
Figure~\ref{fig43-2} highlights that the mean of the ABC posterior shows more variability and more bias with respect to the true value (dashed line). On the other hand, the ABC-cs mean is more accurate than the mean of the calibrated pairwise posterior and is in reasonable agreement with the mean of the full posterior. 
See the Supplementary Material for additional simulation results with different parameter configurations.\\
  
\begin{figure}[h!]
\centering
{
\includegraphics[width=2in, height=2.15in, angle=-90]{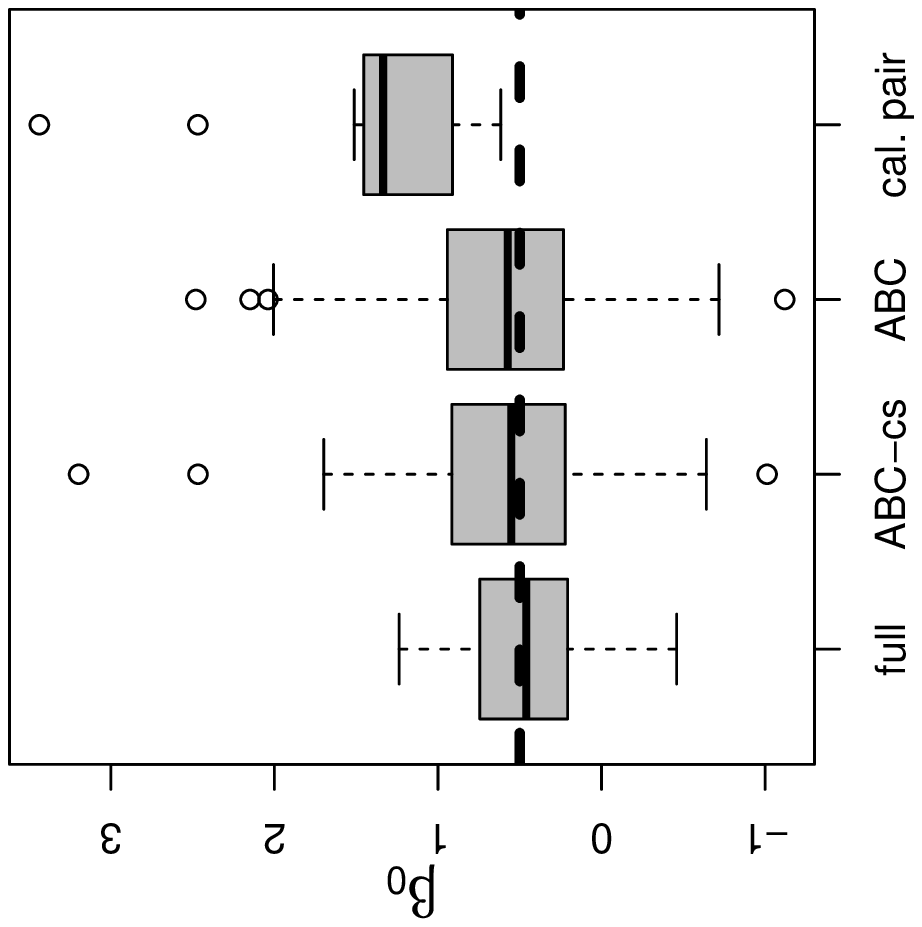}
\includegraphics[width=2in, height=2.1in, angle=-90]{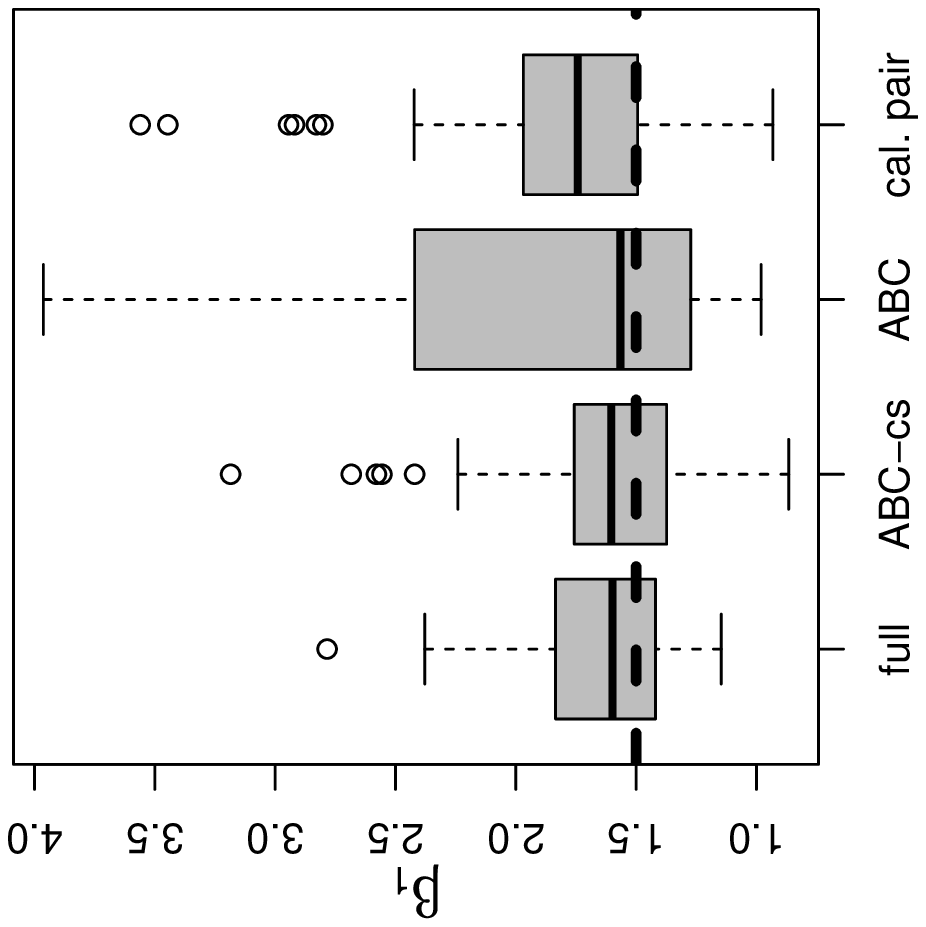}
\includegraphics[width=2in, height=2.15in, angle=-90]{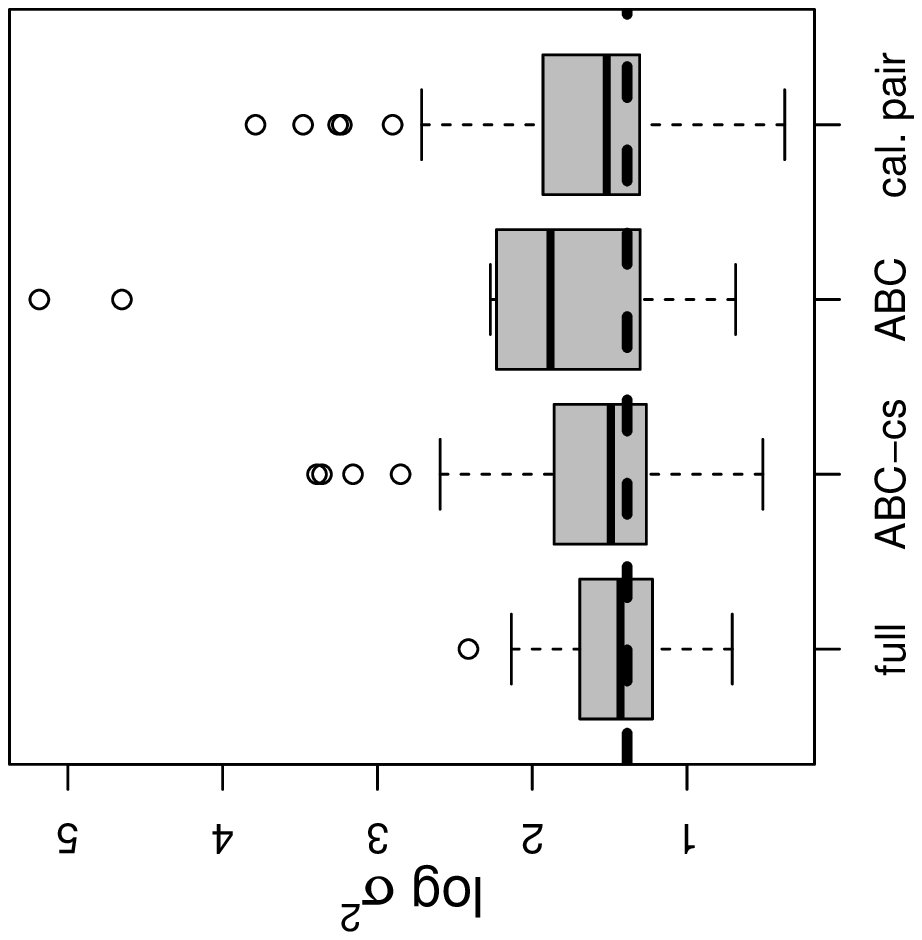}}\\
\caption{\small Multivariate probit model. Simulations based on 100 Monte Carlo trials, with $\beta_0=0.5$, $\beta_1=1$ and $\sigma^2=4$ ($\log\sigma^2=1.39$). }\label{fig43-2}
\end{figure}

\section{Spatial extremes}

Understanding and modelling the behaviour of natural phenomena such as heat waves, heavy rainfall or air pollution can be of interest for climate, social and statistical scientists, to stakeholders such as insurance companies and public health officials. It is therefore important to have useful statistical methods for modelling these extreme occurrences and assessing their possible consequences and impacts. 

As these phenomena materialize in spatio-temporal contexts, a natural approach to their modelling is through the theory of max-stable processes \citep[see][]{schlather2002, kabluchko2009, deHaan1984}, an infinite-dimensional extension to multivariate extreme value theory. Max-stable modelling has the potential advantage of accounting for spatial dependence of extremes in a way that is consistent with the classical extreme-value theory, but is much less well developed than other competitive approaches \citep{davison2012statistical}. Some applications to rainfall data can be found in \cite{buishand2008spatial}, \cite{smith2009extended}, \cite{padoan2010likelihood}, \cite{davison2012statistical}, \cite{ribatet.etal2012}, to temperature data in \cite{davison2012geostatistics}, and to snowfall data in \cite{blanchet2011spatial}.

Here we focus on the popular max-stable process introduced by \cite{smith1990}. Full description of this model can be found in \cite{padoan2010likelihood} and \cite{davison2012statistical}, to which we refer for the details. The bivariate marginal distribution of Smith's model at spatial coordinates $t_k, t_l\in\Real^2$, for $k\neq l ={1,\ldots, q}$, with $q$ being the number of spatial locations, is
\[
\Pr\{Z(t_k)\leq z_k,Z(t_l)\leq z_l\} = \exp\left[-\frac{1}{z_k}\Phi\left(\frac{a(h)}{2}+\frac{1}{a(h)}\log\frac{z_l}{z_k}\right)-\frac{1}{z_l}\Phi\left(\frac{a(h)}{2}+\frac{1}{a(h)}\log\frac{z_k}{z_l}\right)\right],
\]
where $h=(t_l-t_k)$, $a(h) =(h^\T\Sigma^{-1}h)^{1/2}$, $\Sigma$ is the covariance matrix of the process with variances $\sigma_1^2,\sigma_2^2>0$ and covariance $\sigma_{12}$. 
The corresponding density function is obtained by straightforward differentiation \citep[see, e.g.,][]{padoan2010likelihood}.

An expression for the trivariate marginal density of Smith's model is derived by \cite{genton2011likelihood}. However, there is no closed form expression for marginal densities of dimension greater than three and so the full likelihood is intractable. Pairwise likelihood inference is therefore a natural approach in this context, and it was first advocated by \cite{padoan2010likelihood}; see also \cite{blanchet2011spatial}, \cite{sang2014tapered}, \cite{ribatet.etal2012} and \cite{smith2009extended}. Although the triplewise likelihood can be more efficient than the pairwise likelihood \citep{genton2011likelihood}, for processes typically used in applications the efficiency gains are not striking and therefore the pairwise likelihood provides a good compromise between statistical and computational efficiency \citep{huser2013composite}.


The extremal dependence of Smith's model, and in general for other types of max-stable processes, is typically studied through the so called extremal coefficient \citep{smith1990}. In practice, due to high-dimensional distributional complexity the extremal coefficient is limited to pairwise components. Specifically, for Smith's model such extremal coefficient is $\delta(h)=2\Phi(a(h)/2)$, and the range of the spatial dependence is thus completely governed by $\Sigma$.

Given the data $y_1, \ldots,y_n$, assumed to be $n$ independent replications of the random vector $Y_i\in\Real^q$, $i=1,\ldots,n$, with marginal unit Fr\'echet distribution, \cite{padoan2010likelihood} fit Smith's model by maximising the associated pairwise likelihood. For the generic pair of sites $k, l$ ($k\neq l$), $k,l=1,\ldots,q$, and observation $i$, the pairwise log-likelihood is
\[
p\ell(\theta; y_{ik},y_{il}) = A + \log(BC + D) + \log E\,,
\]
with
$$A = -\frac{\Phi(w(h))}{z_{ik}} - \frac{\Phi(v(h))}{z_{il}},\,\quad B = \frac{\Phi(w(h))}{z_{ik}^2} + \frac{\phi(w(h))}{a(h)z_{ik}^2}-\frac{\phi(v(h))}{a(h)z_{ik}z_{il}},$$
$$C = \frac{\Phi(v(h))}{z_{il}^2}+\frac{\phi(v(h))}{a(h)z_{il}^2}-\frac{\phi(w(h))}{a(h)z_{ik}z_{il}}\,,\quad D=\frac{v(h)\phi(w(h))}{a(h)^2z_{ik}^2}+\frac{w(h)\phi(v(h))}{a(h)^2z_{ik}z_{il}^2}\,,$$
$$ E = \frac{1}{\lambda_k\lambda_t}\left(1+\xi_k\frac{y_{ik}-\mu_k}{\lambda_k}\right)_+^{1/\xi_k -1}\left(1+\xi_l\frac{y_{il}-\mu_l}{\lambda_l}\right)_+^{1/\xi_l -1}\,,$$
$w(h) = a(h)/2+ \log(z_{il}/z_{ik})/a(h)$, $v(h)=a(h)-w(h)$ and $a_+ = \max\{0,a\}$.  Notice that $E$ is essentially the Jacobian due to the standardisation from the observed data $y_{ik}$ to unit Fr\'echet $z_{ik}$, and $\mu_k$, $\lambda_k>0$ and $\xi_k$ are continuous functions, that represent respectively location, scale and shape at site $k$.

Following \cite{padoan2010likelihood},  $\mu_k$ and $\lambda_k$ are assumed as response surfaces, that is $\mu_k=X^\mu_{k}\beta^\mu$, $\lambda_k=X^\lambda_k\beta^\lambda$, where $X^\mu_k$ and $X^\lambda_k$ are vectors of covariates for location $k$, whereas $\beta^\mu\in\Real^{p^\mu}$ and $\beta^\lambda\in\Real^{p^\lambda}$ are unknown regression parameters ($k=1,\ldots,q$). Moreover, for simplicity we assume equal shape among the $q$ locations, e.g. $\xi_k=\xi$, for all $k$. The parameter of this model is $\theta=(\sigma_1^2, \sigma_{12}, \sigma_2^2, \beta^\mu, \beta^\lambda, \xi)$. Other possible models can be constructed by considering spline functions instead of linear regression functions but, for simplicity, here we focus on the latter.

The fitting of max-stable processes for extremes with ABC methods has been first proposed by \cite{erhardt2012approximate}. In particular, they transform the data to unit Gumbel, where the marginal parameters are estimated separately by fitting the Generalised Extreme Value (GEV) distribution at each location by maximum likelihood estimation, and successively estimate the dependence parameters using ABC. The summary statistic proposed by \cite{erhardt2012approximate} is the least square fit of the residuals among the empirical and theoretical pairwise or triplewise madogram. However, ABC-cs allows to estimate jointly both the marginal GEV and tail dependence parameters. Moreover, the rescaled composite score is not computationally as demanding as the summary statistic of \cite{erhardt2012approximate}, which at each simulated data requires a least square fit and a scalar numerical integration.

We illustrate ABC-cs using summer (June to August) maximum daily rainfall data at $q=79$ rain gauging stations located in the north of the Alps and east of the Jura mountains in Switzerland. The dataset is provided by the national meteorological service (M\'et\'eoSuisse) and comprises $n=49$ yearly observations which were derived from daily precipitation data from 1962 to 2008 (a reduced version of this dataset is used also by \citealp{davison2012statistical} and \citealp{sang2014tapered}). The full dataset can be found in the \texttt{R} package \texttt{SpatialExtremes} \citep{spextremes}. An exploratory data analysis reveals that there is some variation in the precipitation with latitude and longitude, which suggests that the process may be anisotropic; see also \cite{davison2012statistical} for a more in-depth description of this dataset.

We set $X^\mu_k=X^\lambda_k = (1, x_k, y_k)$, where $x_k$ and $y_k$ are respectively the latitude and the longitude at location $k$, $k=1,\ldots,79$. The marginal parameters are $\beta^\mu = (\beta^\mu_0, \beta^\mu_1,\beta^\mu_2$), $\beta^\lambda=(\beta^\lambda_0, \beta^\lambda_1, \beta^\lambda_2)$ and the shape is $\xi$; hence $\theta$ has 10 unknown parameters. The \texttt{SpatialExtremes} package is used in order to compute the maximum pairwise likelihood estimate (MPLE) $\tilde\theta^{\text{obs}}$, $H(\tilde\theta^{\text{obs}})$ and $J(\tilde\theta^{\text{obs}})$. The pairwise score function is approximated by finite difference methods \citep[][Section 8.6]{monahan2011numerical}. The prior for $\theta$ is uniform in the space $(0, 1000)\times(-300, 300)\times(0,1000)\times(-\infty,+\infty)^6\times(0, \infty)$, under the condition that $\Sigma$ is a proper covariance matrix. For ABC-cs we use importance sampling with a multivariate $t$ distribution with 5 degrees of freedom, centred at $\tilde\theta^{\text{obs}}$ and with scale matrix $2.5V(\tilde\theta^{\text{obs}})$ as importance function. We draw $1.1\times10^6$ values from the importance density and fix $\epsilon$ to 0.5\% quantile of the observed distances, so we end up with 5500 values from the ABC-cs posterior. 

The results are compared with the MPLE, with the pairwise posterior (5) and with the non calibrated pairwise posterior, the latter two approximated by $5\times 10^4$ MCMC samples. The marginal posteriors for $\sigma_1^2, \sigma_{12}, \sigma_2^2, \beta^\lambda$ are shown in Figure~\ref{fig:spext}, whereas Table~\ref{tab:spext} reports some numerical summaries for all the parameters. The plots of the other marginal posteriors are reported in the Supplementary Material. 
 
\begin{figure}[!h]
\includegraphics[scale=0.42, angle=-90]{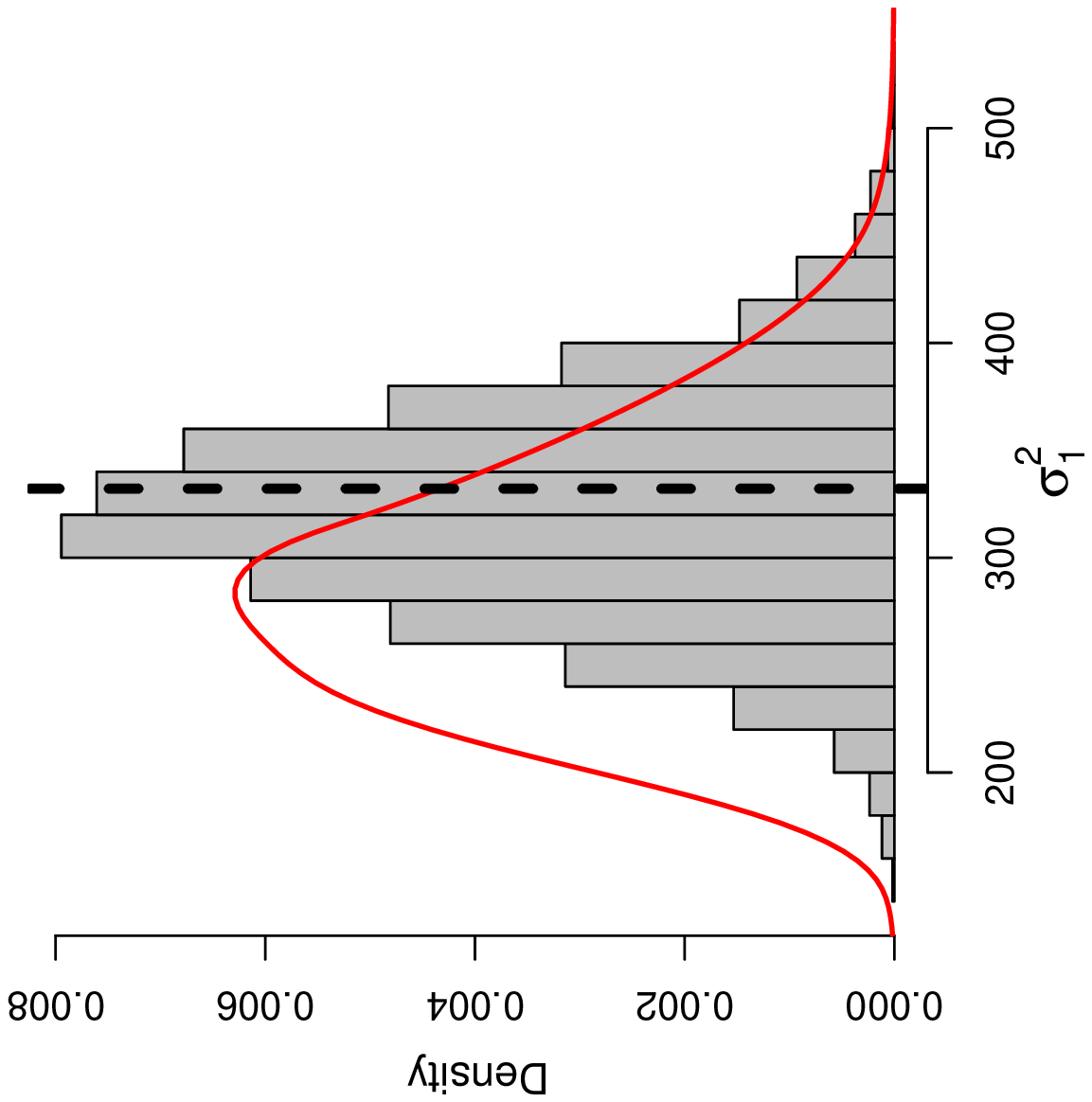}
\includegraphics[scale=0.42, angle=-90]{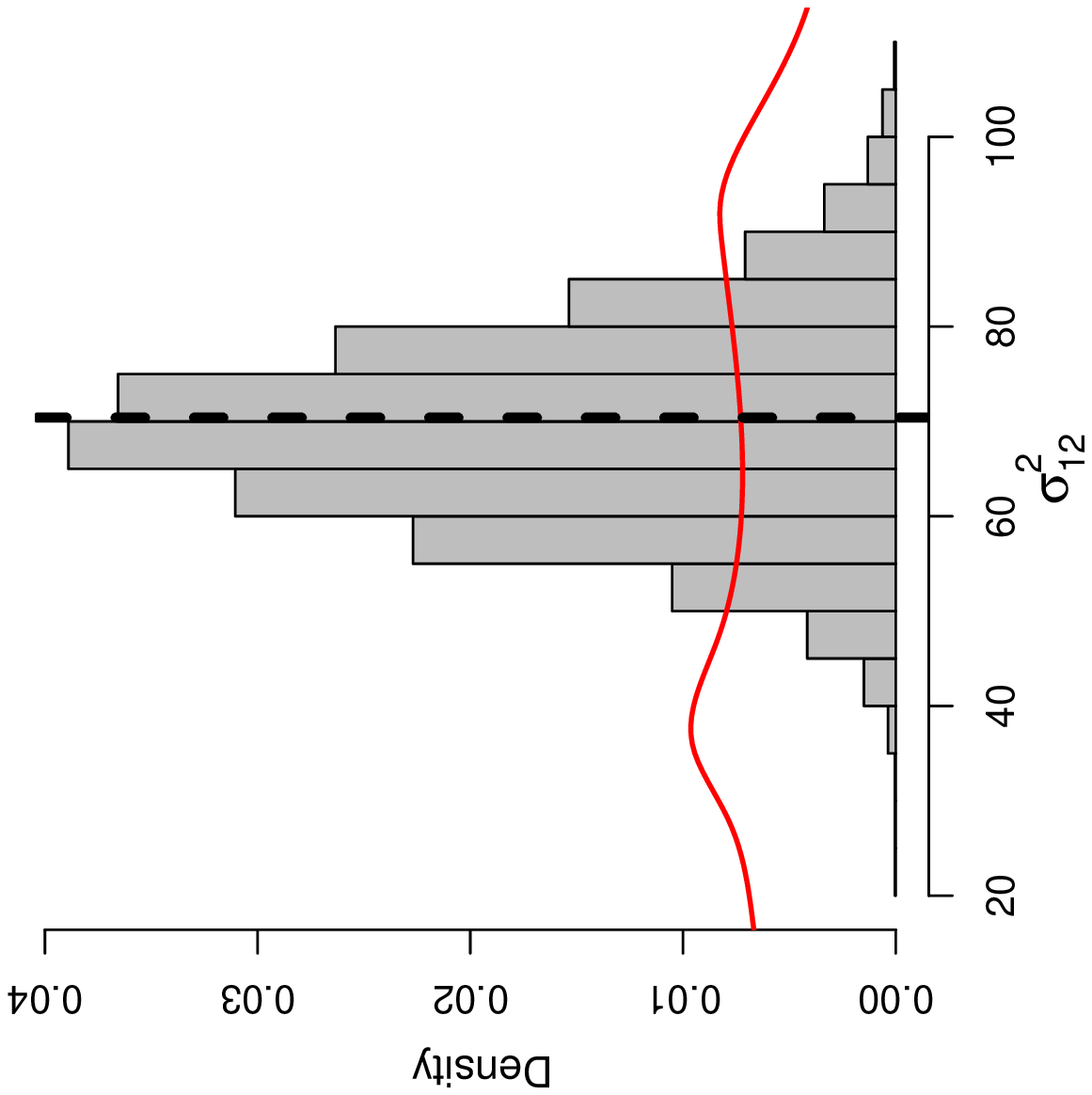}
\includegraphics[scale=0.42, angle=-90]{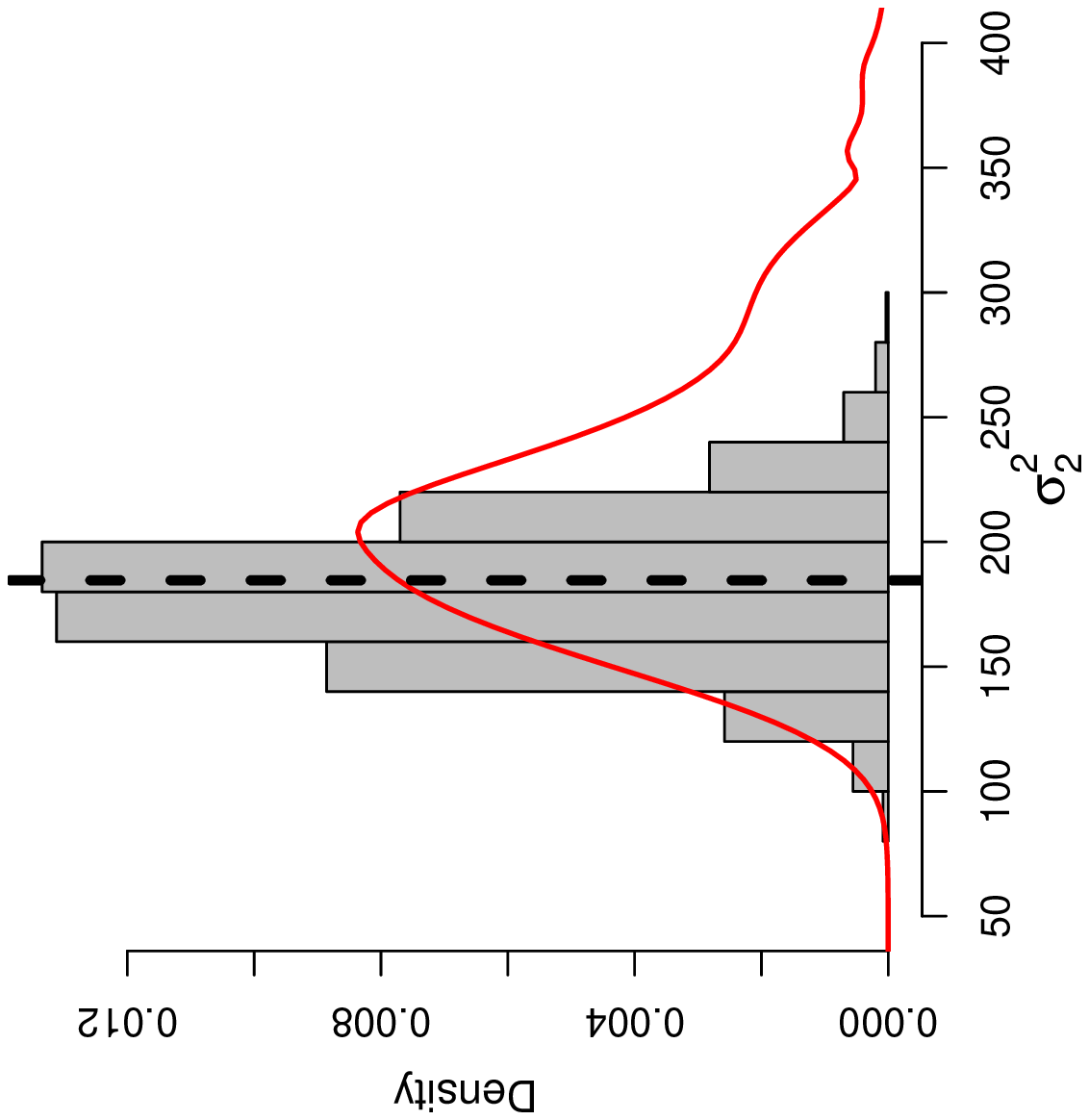}\\
\includegraphics[scale=0.42, angle=-90]{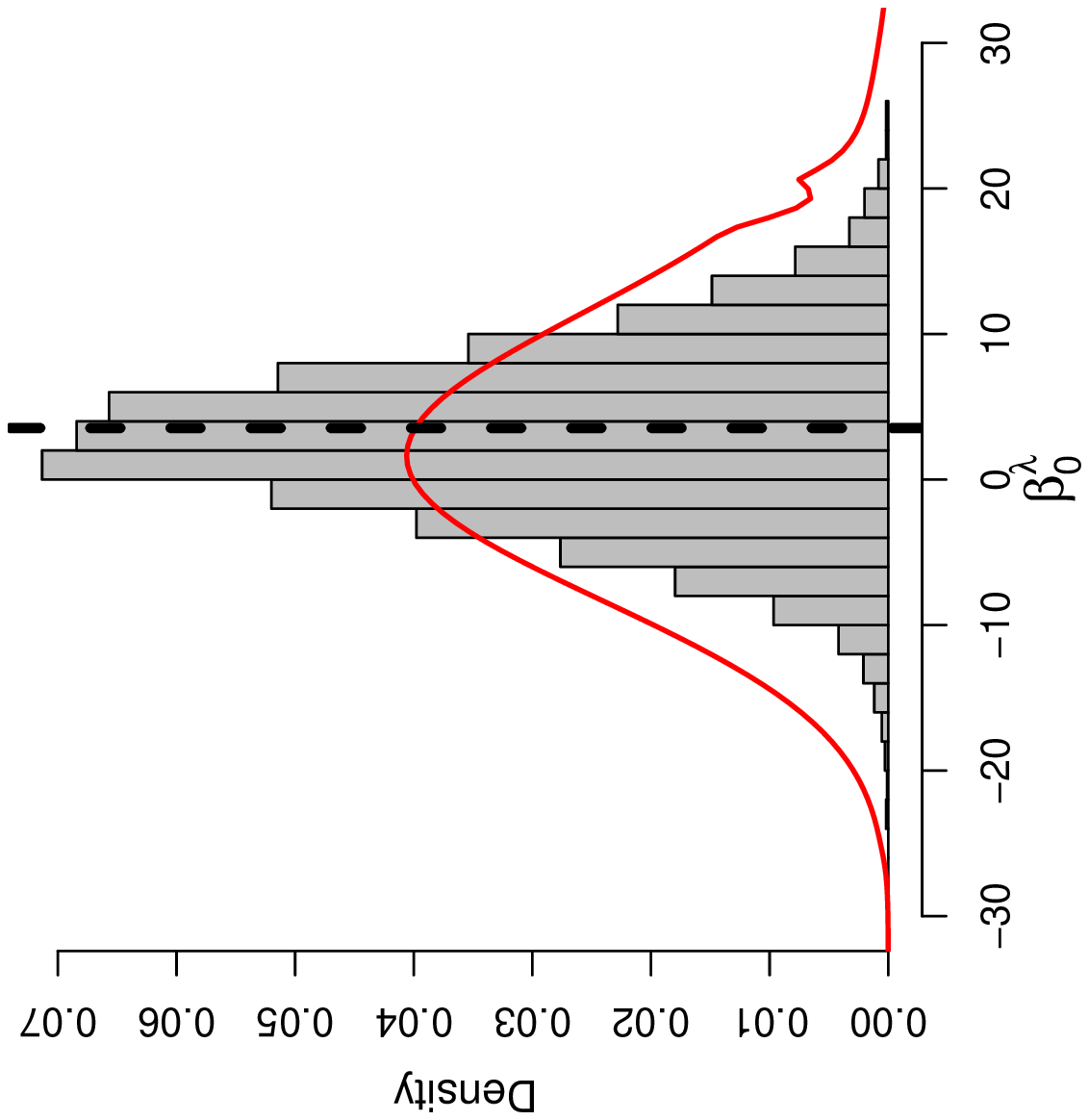}
\includegraphics[scale=0.42, angle=-90]{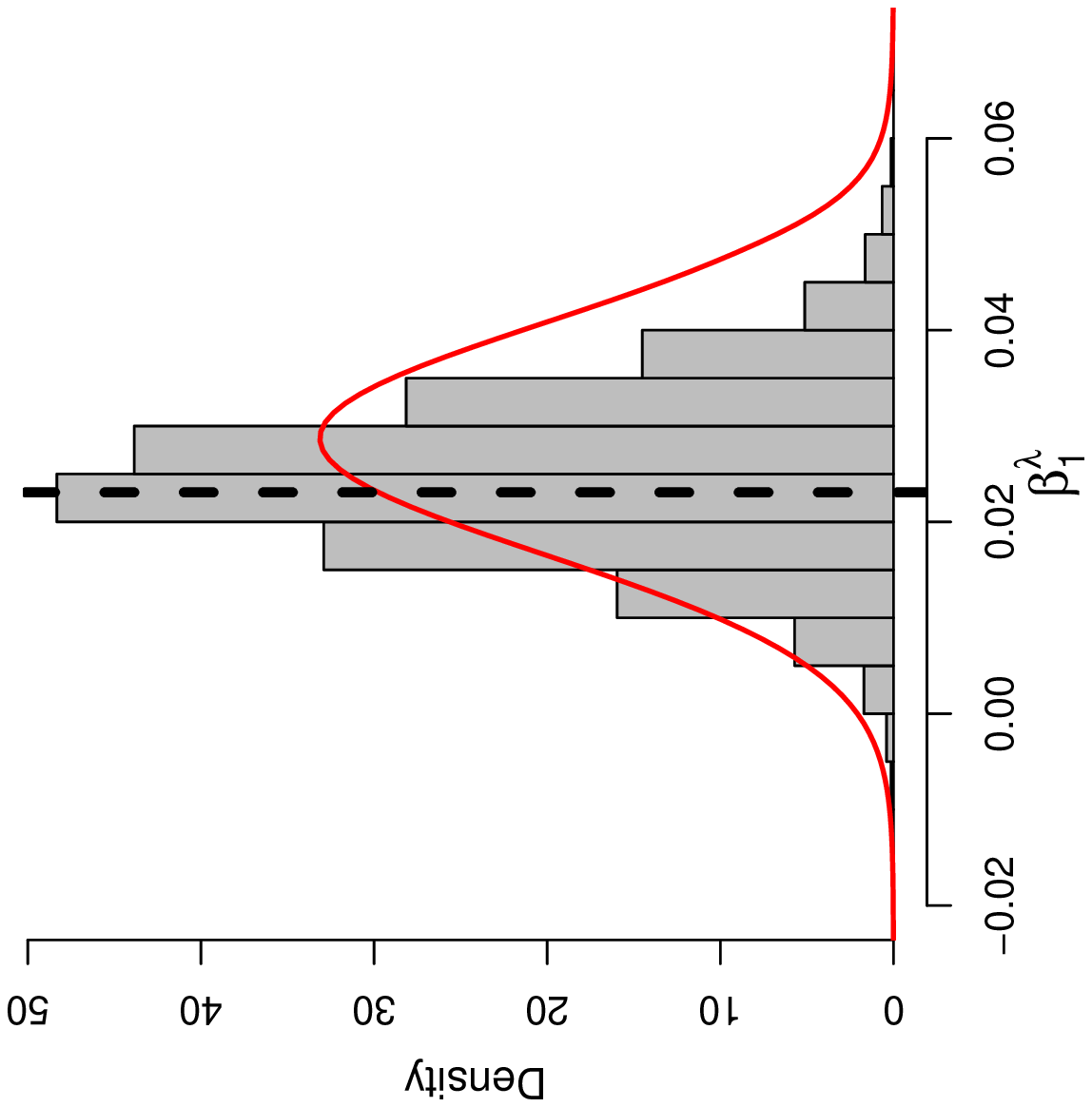}
\includegraphics[scale=0.42, angle=-90]{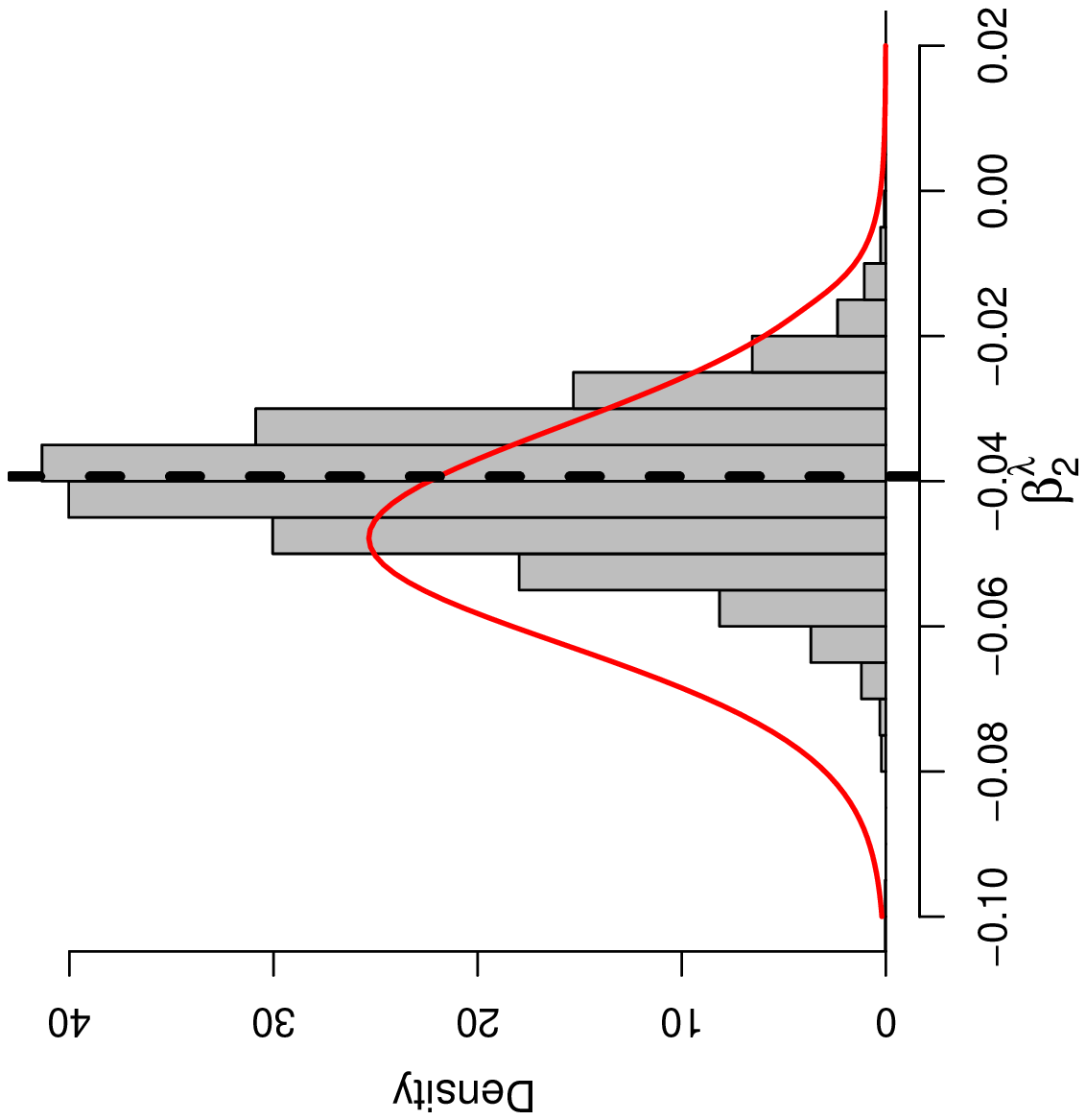}\\
\caption{\small Swiss rainfall data. Marginal ABC-cs (histogram) and calibrated pairwise (kernel densities in red) posteriors compared with MPLE (dashed vertical lines). The first row shows the dependence parameters and the second row shows $\beta^\lambda$.}
\label{fig:spext}
\end{figure}

\begin{table}\centering
\begin{tabular}{crrrr}
\hline
Param. 	 &	 MPLE (SE) & ABC-cs  & Cal. pairwise & Pairwise \\
\hline
$\sigma_1^2$ & 332.15 (55.86) & 321.88 (51.03) & 288.71 (61.96) & 325.63 (4.75)\\
$\sigma_{12}^2$ & 70.40 (11.40)& 68.95 (10.41) & 63.39 (40.94) & 69.99 (2.85)\\
$\sigma_2^2$  & 184.63 (30.45)& 180.17 (27.69) & 215.98 (56.51) & 181.49 (2.74)\\
$\beta^\mu_0$ & 20.65 (8.67) & 21.09 (9.09) & 22.62 (12.22) & 22.79 (0.563) \\
$\beta^\mu_1$ & 0.06 (0.01)	 & 0.06 (0.01) & 0.06 (0.02) &  0.06 ( 0.001)\\
$\beta^\mu_2$ & -0.16 (0.02) & -0.15 (0.02) & -0.15 (0.02) & -0.15 ( 0.001)\\
$\beta^\lambda_0$ & 3.54 (5.56) & 2.65 (5.97) & 2.01 (9.62) & 2.35 (0.43)\\
$\beta^\lambda_1$ & 0.02 (0.01) & 0.02 (0.01) & 0.03 (0.01) & 0.03 (0.001)\\
$\beta^\lambda_2$ & -0.04 (0.01) & -0.04 (0.01)& -0.05 (0.02) & -0.05 (0.001) \\
$\xi$             &  0.19 (0.03) & 0.18 (0.03) & 0.18 (0.03) & 0.18 (0.001)\\
\hline
\end{tabular}
\caption{\small Swiss rainfall data. Means (and standard deviations) of ABC-cs, pairwise and calibrated pairwise posteriors, compared with the MPLE and its asymptotic standard error (SE).}\label{tab:spext}
\end{table}

Table~\ref{tab:spext} confirms that Bayesian inference based on the non calibrated pairwise likelihood can be overly too precise. The calibrated pairwise and the ABC-cs posteriors for the marginal parameters appear to be quite different, especially for the dependence parameters (first row of Figure~\ref{fig:spext}). For instance, while the ABC-cs 0.95 credible interval for $\sigma_{12}$ does not include zero, the contrary holds for MCMC calibrated pairwise credible interval; the latter suggests that the process may be isotropic. The ABC-cs posterior is in good agreement with the MPLE and the corresponding standard errors, whereas the calibrated pairwise posterior appears to be too dispersed.
\begin{figure}[h!]
\includegraphics[angle=-90, scale=0.60]{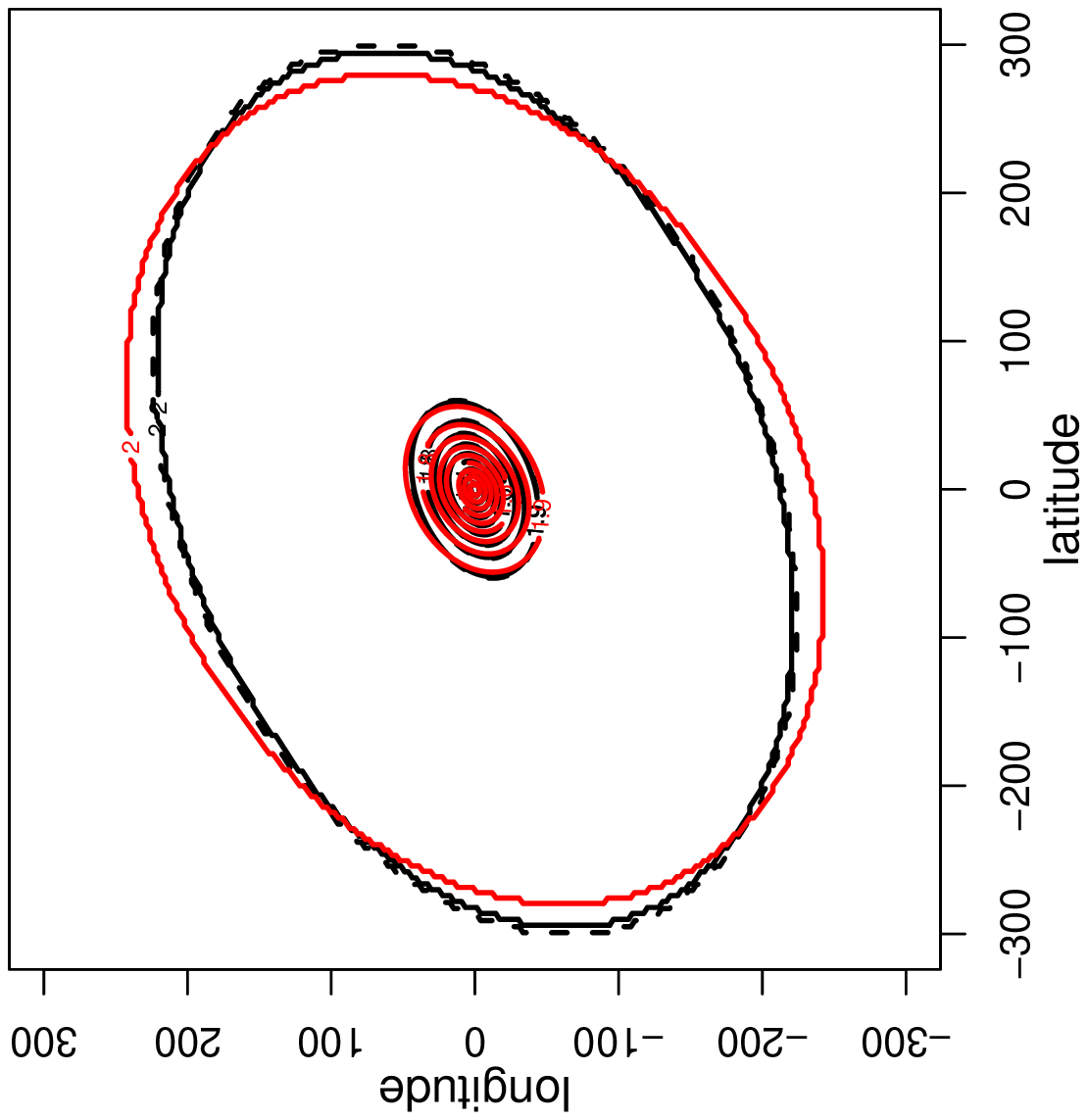}
\includegraphics[angle=-90, scale=0.60]{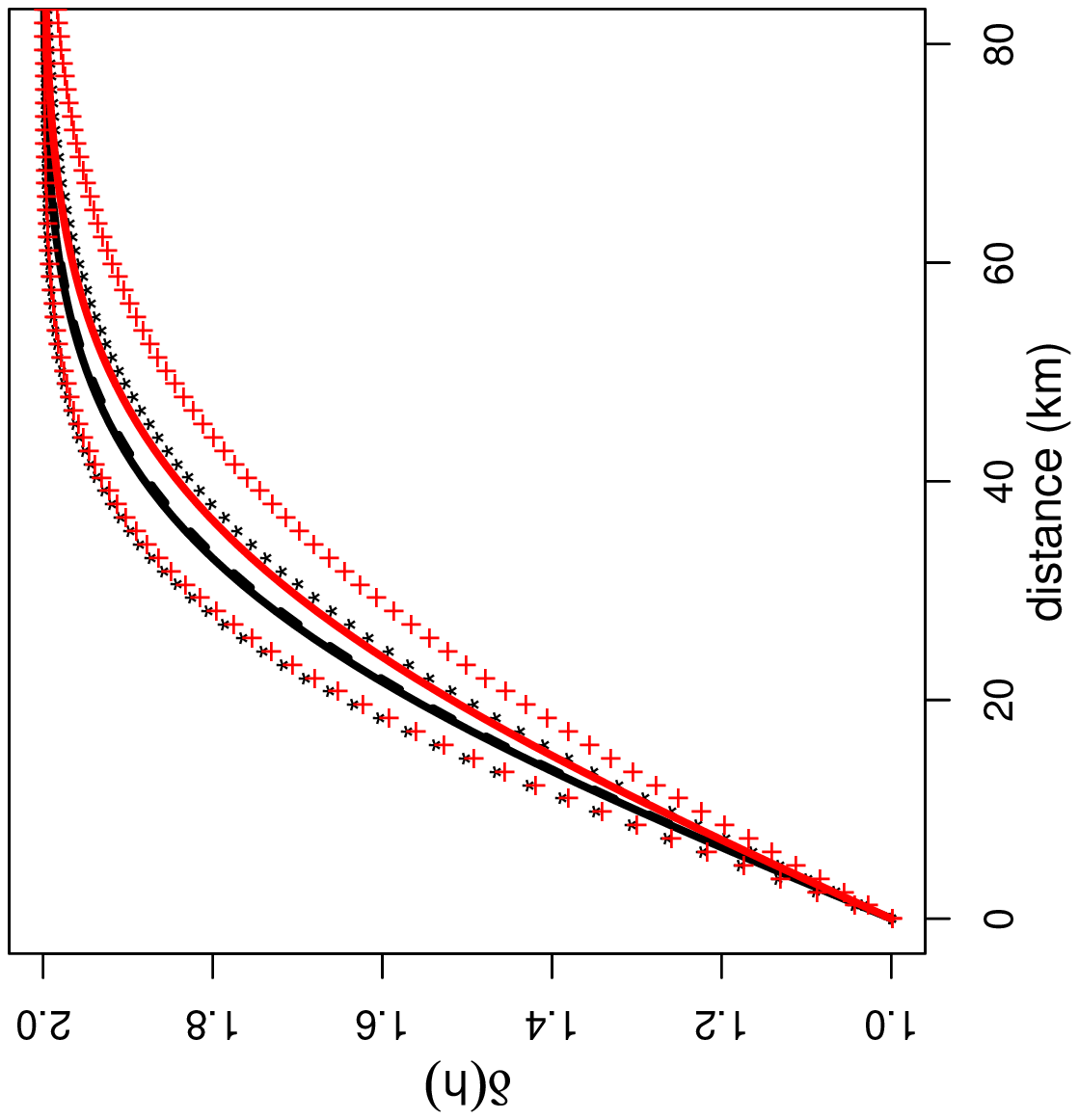}
\caption{\small Left panel: pairwise extremal coefficient of the Smith model with the Swiss rainfall data computed using the ABC-cs posterior mean (continued line), the calibrated pairwise posterior mean (red line) and the MPLE (dashed). Right panel: pairwise extremal coefficient plotted as function of the Euclidean distance among the locations, with 0.95 pointwise credible bands from ABC-cs (*) and calibrated pairwise posterior (+).}
\label{fig:extremal}
\end{figure}

We further compare the MPLE with the ABC-cs and the calibrated pairwise posteriors by plotting the estimated pairwise extremal coefficient calculated at the corresponding means. These comparisons are shown on the left panel of Figure~\ref{fig:extremal}. Again we notice that ABC-cs is very similar to the MPLE, whereas the extremal coefficient based on the calibrated pairwise posterior appears substantially different. Moreover, this plot confirms that the calibrated pairwise posterior shows more isotropy than MPLE or ABC-cs. The right panel of Figure~\ref{fig:extremal} shows the extremal coefficient as a function of the Euclidean distance among the locations, with 0.95 credible bands computed pointwise at each distance of $h$. This plot highlights that the extremal coefficients computed from the calibrated pairwise posterior show more variability than those obtained from ABC-cs. The Wald-type confidence bands of MPLE essentially overlap with those of ABC-cs and therefore are not reported.

\section{Discussion}

A new procedure for constructing summary statistics for ABC is proposed, which is based on a rescaled  composite score function. An advantage of the proposed method is that, by construction, the summary statistics automatically incorporate relevant features of the complex model, and its dimension is the same as the number of parameters. Moreover, no post processing tasks are required, nor pilot runs or {\em ad hoc} summaries of the data. With a little additional computational effort, the Godambe information can be obtained as a by-product of our method. Such information matrix can be used as a scaling matrix in simulation schemes.

{Although being computationally more expensive than Bayesian composite posteriors, ABC-cs does not require calibration. Moreover, as seen from the examples and from the application, Bayesian inference with composite likelihoods, both calibrated and non calibrated, can be quite inaccurate.}

The ABC-cs procedure depends of course on the availability of a reasonable composite likelihood for the problem under investigation, which may not be easy to obtain, or even define, in some problems. 
However, there is a rich and growing literature on composite likelihoods \citep{varin2011overview,larribe2011composite}, which we believe may be fruitfully used to identify the class of problems in which composite likelihoods may be used in ABC and also to guide the choice of the more appropriate composite likelihood.
In general, a sensible composite likelihood has to be a good approximation for the full likelihood, or at least it has to appropriately describe the main features of interest of the model, by keeping a reasonable computational complexity. Even with this in mind,  there could still be possible competing composite likelihoods for the same model. Recent contributions in  frequentist inference consider the idea of combining different composite likelihoods in order to improve efficiency \citep{cox2004note,kennepagui2014}. This idea could be used also in ABC-cs, and further extended to the combination of composite scores with other summary statistics.


Finally, we note that we used the composite likelihood as a natural basis to construct a suitable unbiased estimating function in complex models. However, the proposed ABC algorithm works with any unbiased estimating function, such as for instance those used in the robust literature \citep[see, e.g.,][]{huber2009robust}.

{\section*{Supplementary Material}
The online Supplementary Material includes additional simulations and plots for the examples of Section 4 and the application in Section 5, and another example with a moving average process of order 2. The \texttt{R} code for the examples and the application is also included.
}

{\section*{Acknowledgements}

We are grateful to two anonymous Reviewers for their thoughtful comments which lead to a substantially improved version of our initial draft. We thank Manuela Cattelan for sharing her code on the multivariate probit example. This work was supported by a grant from the University of Padua (Progetti di Ricerca di Ateneo 2013) and by the Cariparo Foundation Excellence-grant 2011/2012.}
 
\bibliographystyle{biometrika}
\bibliography{Bibliography}

\begin{thebibliography}{54}
\expandafter\ifx\csname natexlab\endcsname\relax\def\natexlab#1{#1}\fi

\bibitem[{Barthelm{\'e} \& Chopin(2014)}]{barthelme2011}
\textsc{Barthelm{\'e}, S.} \& \textsc{Chopin, N.} (2014).
\newblock Expectation propagation for likelihood-free inference.
\newblock \textit{Journal of the American Statistical Association}
  \textbf{109}, 315--333.

\bibitem[{Beaumont(2010)}]{beaumont2010}
\textsc{Beaumont, M.~A.} (2010).
\newblock {Approximate Bayesian computation in evolution and ecology}.
\newblock \textit{Annual Review of Ecology, Evolution, and Systematics}
  \textbf{41}, 379--406.

\bibitem[{Beaumont \emph{et~al.}(2009)Beaumont, Cornuet, Marin \&
  Robert}]{beaumont2009adaptive}
\textsc{Beaumont, M.~A.}, \textsc{Cornuet, J.-M.}, \textsc{Marin, J.-M.} \&
  \textsc{Robert, C.~P.} (2009).
\newblock {Adaptive approximate Bayesian computation}.
\newblock \textit{Biometrika} \textbf{96}, 983--990.

\bibitem[{Beaumont \emph{et~al.}(2002)Beaumont, Zhang \&
  Balding}]{beaumont2002}
\textsc{Beaumont, M.~A.}, \textsc{Zhang, W.} \& \textsc{Balding, D.~J.} (2002).
\newblock {Approximate Bayesian computation in population genetics.}
\newblock \textit{Genetics} \textbf{162}, 2025--2035.

\bibitem[{Blanchet \& Davison(2011)}]{blanchet2011spatial}
\textsc{Blanchet, J.} \& \textsc{Davison, A.~C.} (2011).
\newblock Spatial modeling of extreme snow depth.
\newblock \textit{The Annals of Applied Statistics} \textbf{5}, 1699--1725.

\bibitem[{Blum(2010)}]{blum2010approximate}
\textsc{Blum, M. G.~B.} (2010).
\newblock Approximate bayesian computation: a nonparametric perspective.
\newblock \textit{Journal of the American Statistical Association}
  \textbf{105}, 1178--1187.

\bibitem[{Brazzale \emph{et~al.}(2007)Brazzale, Davison \& Reid}]{bdr2007}
\textsc{Brazzale, A.~R.}, \textsc{Davison, A.~C.} \& \textsc{Reid, N.} (2007).
\newblock \textit{Applied Asymptotics: Case Studies in Small-Sample
  Statistics}.
\newblock Cambridge: Cambridge University Press.

\bibitem[{Buishand \emph{et~al.}(2008)Buishand, de~Haan, Zhou
  \emph{et~al.}}]{buishand2008spatial}
\textsc{Buishand, T.}, \textsc{de~Haan, L.}, \textsc{Zhou, C.} \emph{et~al.}
  (2008).
\newblock On spatial extremes: With application to a rainfall problem.
\newblock \textit{The Annals of Applied Statistics} \textbf{27}, 624--642.

\bibitem[{Cattelan \& Sartori(2014)}]{cattelan14}
\textsc{Cattelan, M.} \& \textsc{Sartori, N.} (2014).
\newblock Empirical and simulated adjustments of composite likelihood ratio
  statistics.
\newblock \textit{arXiv} \textbf{arXiv:1403.7093v1}.

\bibitem[{Chib \& Greenberg(1998)}]{chib1998analysis}
\textsc{Chib, S.} \& \textsc{Greenberg, E.} (1998).
\newblock Analysis of multivariate probit models.
\newblock \textit{Biometrika} \textbf{85}, 347--361.

\bibitem[{Cox \& Reid(2004)}]{cox2004note}
\textsc{Cox, D.~R.} \& \textsc{Reid, N.} (2004).
\newblock A note on pseudolikelihood constructed from marginal densities.
\newblock \textit{Biometrika} \textbf{91}, 729--737.

\bibitem[{Davison \& Gholamrezaee(2012)}]{davison2012geostatistics}
\textsc{Davison, A.~C.} \& \textsc{Gholamrezaee, M.~M.} (2012).
\newblock Geostatistics of extremes.
\newblock \textit{Proceedings of the Royal Society A: Mathematical, Physical
  and Engineering Science} \textbf{468}, 581--608.

\bibitem[{Davison \emph{et~al.}(2012)Davison, Padoan \&
  Ribatet}]{davison2012statistical}
\textsc{Davison, A.~C.}, \textsc{Padoan, S.~A.} \& \textsc{Ribatet, M.} (2012).
\newblock Statistical modeling of spatial extremes.
\newblock \textit{Statistical Science} \textbf{27}, 161--186.

\bibitem[{De~Haan(1984)}]{deHaan1984}
\textsc{De~Haan, L.} (1984).
\newblock A spectral representation for max-stable processes.
\newblock \textit{The Annals of Probability} \textbf{12}, 1194--1204.

\bibitem[{Del~Moral \emph{et~al.}(2012)Del~Moral, Doucet \&
  Jasra}]{del2012adaptive}
\textsc{Del~Moral, P.}, \textsc{Doucet, A.} \& \textsc{Jasra, A.} (2012).
\newblock {An adaptive sequential Monte Carlo method for approximate Bayesian
  computation}.
\newblock \textit{Statistics and Computing} \textbf{22}, 1009--1020.

\bibitem[{Drovandi \& Pettitt(2011)}]{drovandi2011estimation}
\textsc{Drovandi, C.~C.} \& \textsc{Pettitt, A.~N.} (2011).
\newblock {Estimation of parameters for macroparasite population evolution
  using approximate Bayesian computation}.
\newblock \textit{Biometrics} \textbf{67}, 225--233.

\bibitem[{Drovandi \emph{et~al.}(2011)Drovandi, Pettitt \&
  Faddy}]{drovandi2011approximate}
\textsc{Drovandi, C.~C.}, \textsc{Pettitt, A.~N.} \& \textsc{Faddy, M.~J.}
  (2011).
\newblock {Approximate Bayesian computation using indirect inference}.
\newblock \textit{Journal of the Royal Statistical Society: Series C}
  \textbf{60}, 317--337.

\bibitem[{Drovandi \emph{et~al.}(2014)Drovandi, Pettitt \& Lee}]{drovandi2014}
\textsc{Drovandi, C.~C.}, \textsc{Pettitt, A.~N.} \& \textsc{Lee, A.} (2014).
\newblock Bayesian indirect inference using a parametric auxiliary model.
\newblock \textit{Statistical Science, \rm{to appear}$\!$} .

\bibitem[{Erhardt \& Smith(2012)}]{erhardt2012approximate}
\textsc{Erhardt, R.~J.} \& \textsc{Smith, R.~L.} (2012).
\newblock {Approximate Bayesian computing for spatial extremes}.
\newblock \textit{Computational Statistics \& Data Analysis} \textbf{56},
  1468--1481.

\bibitem[{Fearnhead \& Prangle(2012)}]{fearnhead:2012to}
\textsc{Fearnhead, P.} \& \textsc{Prangle, D.} (2012).
\newblock {Constructing summary statistics for approximate Bayesian
  computation: semi-automatic approximate Bayesian computation (with
  Discussion)}.
\newblock \textit{Journal of the Royal Statistical Society: Series B}
  \textbf{74}, 419--474.

\bibitem[{Fraser \& Reid(1995)}]{Fraser.Reid:1995}
\textsc{Fraser, D. A.~S.} \& \textsc{Reid, N.} (1995).
\newblock {Ancillaries and third order significance}.
\newblock \textit{Utilitas Mathematica} \textbf{47}, 33--53.

\bibitem[{Genton \emph{et~al.}(2011)Genton, Ma \& Sang}]{genton2011likelihood}
\textsc{Genton, M.~G.}, \textsc{Ma, Y.} \& \textsc{Sang, H.} (2011).
\newblock {On the likelihood function of Gaussian max-stable processes}.
\newblock \textit{Biometrika} \textbf{98}, 481--488.

\bibitem[{Gleim \& Pigorsch(2013)}]{gleim2013}
\textsc{Gleim, A.} \& \textsc{Pigorsch, C.} (2013).
\newblock {Approximate Bayesian computation with indirect summary statistics}.
\newblock Technical report, University of Bonn.

\bibitem[{Gourieroux \emph{et~al.}(1993)Gourieroux, Monfort \&
  Renault}]{gourieroux1993indirect}
\textsc{Gourieroux, C.}, \textsc{Monfort, A.} \& \textsc{Renault, E.} (1993).
\newblock Indirect inference.
\newblock \textit{Journal of Applied Econometrics} \textbf{8}, S85--S118.

\bibitem[{Heggland \& Frigessi(2004)}]{heggland2004}
\textsc{Heggland, K.} \& \textsc{Frigessi, A.} (2004).
\newblock Estimating functions in indirect inference.
\newblock \textit{Journal of the Royal Statistical Society: Series B}
  \textbf{66}, 447--462.

\bibitem[{Huber \& Ronchetti(2009)}]{huber2009robust}
\textsc{Huber, P.~J.} \& \textsc{Ronchetti, E.~M.} (2009).
\newblock \textit{Robust Statistics}.
\newblock Hoboken, New Jersey: Wiley.

\bibitem[{Huser \& Davison(2013)}]{huser2013composite}
\textsc{Huser, R.} \& \textsc{Davison, A.~C.} (2013).
\newblock {Composite likelihood estimation for the Brown--Resnick process}.
\newblock \textit{Biometrika} \textbf{100}, 511--518.

\bibitem[{Jung \& Marjoram(2011)}]{HsuanJung2011ga}
\textsc{Jung, H.} \& \textsc{Marjoram, P.} (2011).
\newblock {Choice of summary statistic weights in approximate Bayesian
  computation}.
\newblock \textit{Statistical Applications in Genetics and Molecular Biology}
  \textbf{10}, Article 45.

\bibitem[{Kabluchko \emph{et~al.}(2009)Kabluchko, Schlather \&
  De~Haan}]{kabluchko2009}
\textsc{Kabluchko, Z.}, \textsc{Schlather, M.} \& \textsc{De~Haan, L.} (2009).
\newblock Stationary max-stable fields associated to negative definite
  functions.
\newblock \textit{The Annals of Probability} \textbf{37}, 2042--2065.

\bibitem[{Kenne~Pagui \emph{et~al.}(2014)Kenne~Pagui, Salvan \&
  Sartori}]{kennepagui2014}
\textsc{Kenne~Pagui, E.~C.}, \textsc{Salvan, A.} \& \textsc{Sartori, N.}
  (2014).
\newblock Combined composite likelihood.
\newblock \textit{The Canadian Journal of Statistics} \textbf{42}, 525--543.

\bibitem[{Kenne~Pagui \emph{et~al.}(2015)Kenne~Pagui, Salvan \&
  Sartori}]{pagui2015full}
\textsc{Kenne~Pagui, E.~C.}, \textsc{Salvan, A.} \& \textsc{Sartori, N.}
  (2015).
\newblock On full efficiency of the maximum composite likelihood estimator.
\newblock \textit{Statistics \& Probability Letters} \textbf{97}, 120--124.

\bibitem[{Larribe \& Fearnhead(2011)}]{larribe2011composite}
\textsc{Larribe, F.} \& \textsc{Fearnhead, P.} (2011).
\newblock On composite likelihoods in statistical genetics.
\newblock \textit{Statistica Sinica} \textbf{21}, 43--69.

\bibitem[{Le~Cessie \& van Houwelingen(1994)}]{le1994logistic}
\textsc{Le~Cessie, S.} \& \textsc{van Houwelingen, J.~C.} (1994).
\newblock Logistic regression for correlated binary data.
\newblock \textit{Journal of the Royal Statistical Society: Series C}
  \textbf{43}, 95--108.

\bibitem[{Lindsay(1988)}]{lindsay1988composite}
\textsc{Lindsay, B.~G.} (1988).
\newblock Composite likelihood methods.
\newblock \textit{Contemporary Mathematics} \textbf{80}, 220--239.

\bibitem[{Lindsay \emph{et~al.}(2011)Lindsay, Yi \& Sun}]{lindsay2011issues}
\textsc{Lindsay, B.~G.}, \textsc{Yi, G.~Y.} \& \textsc{Sun, J.} (2011).
\newblock Issues and strategies in the selection of composite likelihoods.
\newblock \textit{Statistica Sinica} \textbf{21}, 71--105.

\bibitem[{Marin \emph{et~al.}(2012)Marin, Pudlo, Robert \& Ryder}]{marin2012}
\textsc{Marin, J.-M.}, \textsc{Pudlo, P.}, \textsc{Robert, C.~P.} \&
  \textsc{Ryder, R.~J.} (2012).
\newblock {Approximate Bayesian computational methods}.
\newblock \textit{Statistics and Computing} \textbf{22}, 1167--1180.

\bibitem[{Marjoram \emph{et~al.}(2003)Marjoram, Molitor, Plagnol \&
  Tavar{\'e}}]{marjoram2003markov}
\textsc{Marjoram, P.}, \textsc{Molitor, J.}, \textsc{Plagnol, V.} \&
  \textsc{Tavar{\'e}, S.} (2003).
\newblock {Markov chain Monte Carlo without likelihoods}.
\newblock \textit{Proceedings of the National Academy of Sciences}
  \textbf{100}, 15324--15328.

\bibitem[{Mengersen \emph{et~al.}(2013)Mengersen, Pudlo \&
  Robert}]{mengersen2013bayesian}
\textsc{Mengersen, K.~L.}, \textsc{Pudlo, P.} \& \textsc{Robert, C.~P.} (2013).
\newblock Bayesian computation via empirical likelihood.
\newblock \textit{Proceedings of the National Academy of Sciences}
  \textbf{110}, 1321--1326.

\bibitem[{Molenberghs \& Verbeke(2005)}]{molenberghs2005models}
\textsc{Molenberghs, G.} \& \textsc{Verbeke, G.} (2005).
\newblock \textit{Models for Discrete Longitudinal Data}.
\newblock New York: Springer.

\bibitem[{Monahan(2011)}]{monahan2011numerical}
\textsc{Monahan, J.~F.} (2011).
\newblock \textit{Numerical Methods of Statistics}.
\newblock Cambridge University Press, 2nd ed.

\bibitem[{Pace \& Salvan(1997)}]{pacesalvan97}
\textsc{Pace, L.} \& \textsc{Salvan, A.} (1997).
\newblock \textit{Principles of Statistical Inference}.
\newblock Singapore: World Scientific.

\bibitem[{Pace \emph{et~al.}(2011)Pace, Salvan \& Sartori}]{pace2011adjusting}
\textsc{Pace, L.}, \textsc{Salvan, A.} \& \textsc{Sartori, N.} (2011).
\newblock Adjusting composite likelihood ratio statistics.
\newblock \textit{Statistica Sinica} \textbf{21}, 129--148.

\bibitem[{Padoan \emph{et~al.}(2010)Padoan, Ribatet \&
  Sisson}]{padoan2010likelihood}
\textsc{Padoan, S.~A.}, \textsc{Ribatet, M.} \& \textsc{Sisson, S.~A.} (2010).
\newblock Likelihood-based inference for max-stable processes.
\newblock \textit{Journal of the American Statistical Association}
  \textbf{105}, 263--277.

\bibitem[{Pauli \emph{et~al.}(2011)Pauli, Racugno \&
  Ventura}]{pauli2011composite}
\textsc{Pauli, F.}, \textsc{Racugno, W.} \& \textsc{Ventura, L.} (2011).
\newblock Bayesian composite marginal likelihoods.
\newblock \textit{Statistica Sinica} \textbf{21}, 149--164.

\bibitem[{Reid(2003)}]{reid2003}
\textsc{Reid, N.} (2003).
\newblock Asymptotics and the theory of inference.
\newblock \textit{The Annals of Statistics} \textbf{31}, 1695--1731.

\bibitem[{Ribatet \emph{et~al.}(2012)Ribatet, Cooley \&
  Davison}]{ribatet.etal2012}
\textsc{Ribatet, M.}, \textsc{Cooley, D.} \& \textsc{Davison, A.~C.} (2012).
\newblock Bayesian inference from composite likelihoods, with an application to
  spatial extremes.
\newblock \textit{Statistica Sinica} \textbf{22}, 813--845.

\bibitem[{Ribatet \emph{et~al.}(2013)Ribatet, Singleton \& team}]{spextremes}
\textsc{Ribatet, M.}, \textsc{Singleton, R.} \& \textsc{team, R.~C.} (2013).
\newblock \textit{SpatialExtremes: Modelling Spatial Extremes}.
\newblock R package version 2.0-0.

\bibitem[{Sang \& Genton(2014)}]{sang2014tapered}
\textsc{Sang, H.} \& \textsc{Genton, M.~G.} (2014).
\newblock Tapered composite likelihood for spatial max-stable models.
\newblock \textit{Spatial Statistics} \textbf{8}, 86--103.

\bibitem[{Schlather(2002)}]{schlather2002}
\textsc{Schlather, M.} (2002).
\newblock Models for stationary max-stable random fields.
\newblock \textit{Extremes} \textbf{5}, 33--44.

\bibitem[{Sisson \emph{et~al.}(2007)Sisson, Fan \& Tanaka}]{sisson07}
\textsc{Sisson, S.}, \textsc{Fan, Y.} \& \textsc{Tanaka, M.} (2007).
\newblock {Sequential Monte Carlo without likelihoods}.
\newblock \textit{Proceedings of the National Academy of Sciences of the United
  States of America} \textbf{104}, 1760--1765.

\bibitem[{Sisson \emph{et~al.}(2009)Sisson, Fan \& Tanaka}]{sisson09}
\textsc{Sisson, S.}, \textsc{Fan, Y.} \& \textsc{Tanaka, M.} (2009).
\newblock {Sequential Monte Carlo without likelihoods: Errata}.
\newblock \textit{Proceedings of the National Academy of Sciences of the United
  States of America} \textbf{106}, 16889.

\bibitem[{Smith \& Stephenson(2009)}]{smith2009extended}
\textsc{Smith, E.~L.} \& \textsc{Stephenson, A.~G.} (2009).
\newblock An extended {Gaussian} max-stable process model for spatial extremes.
\newblock \textit{Journal of Statistical Planning and Inference} \textbf{139},
  1266--1275.

\bibitem[{Smith(1990)}]{smith1990}
\textsc{Smith, R.~L.} (1990).
\newblock Max-stable processes and spatial extremes.
\newblock Unpublished manuscript, University of North Carolina.

\bibitem[{Varin \emph{et~al.}(2011)Varin, Reid \& Firth}]{varin2011overview}
\textsc{Varin, C.}, \textsc{Reid, N.} \& \textsc{Firth, D.} (2011).
\newblock An overview of composite likelihood methods.
\newblock \textit{Statistica Sinica} \textbf{21}, 5--42.

\end{thebibliography}
\end{document}